\documentclass[11pt]{article}%

\usepackage[ruled,vlined,linesnumbered]{algorithm2e}

\newcommand{\lncsstyle}[1]{}
\newcommand{\normalstyle}[1]{#1}

\normalstyle{
  \usepackage{fullpage}
}
\usepackage{amssymb,amsmath}
\usepackage{graphicx, epsfig}
\usepackage{tikz}

\tikzset{
   treenode/.style = {align=center, inner sep=0pt, text centered,
    font=\sffamily},
  int_m/.style = {treenode, circle, black, fill,font=\sffamily\bfseries, draw=black,
     text width=1.0em, very thick},% real leaves
  int_u/.style = {treenode, circle, black, draw=black,  
    text width=1.0em, thick},% pseudo-leaves and pseudo-nodes
  leaf_m/.style={treenode, rectangle, black, fill, draw=black, 
    text width=1em,minimum width=1em, minimum height=1em}, % real internal nodes
  leaf_u/.style = {treenode, rectangle, black,draw=black,
    minimum width=0.9em, minimum height=0.9em},% arbre rouge noir, nil
  level 1/.style={level distance=15mm,sibling distance=60mm},
  level 2/.style={level distance=15mm,sibling distance=22mm},
  level 3/.style={level distance=15mm,sibling distance=12mm},
  level 4/.style={level distance=15mm,sibling distance=5mm}
}

%\usepackage{float} 
%\usepackage[caption = false]{subfig}

%\usepackage[compact]{titlesec}

% for US letter size
%\pdfpagewidth=8.5in
%\pdfpageheight=11in

  % Use with capital letter args only

\newcommand{\no}[1]{}
\newcommand{\myparagraph}[1]{{\bf #1}}
\newcommand{\todo}[1]{} %{{TO DO: {\sc #1}}}

\normalstyle{
\newtheorem{theorem}{Theorem}
\newtheorem{lemma}{Lemma}

\newenvironment{proof}{\trivlist\item[]\emph{Proof}:}%
{\unskip\nobreak\hskip 1em plus 1fil\nobreak$\Box$
\parfillskip=0pt%
\endtrivlist}
}

\newenvironment{itemize*}%
  {\begin{itemize}%
    \setlength{\itemsep}{0pt}%
    \setlength{\parskip}{0pt}%
    \setlength{\parsep}{0pt}%
    \setlength{\topsep}{0pt}%
    \setlength{\partopsep}{0pt}%
  }%
  {\end{itemize}}%

\renewenvironment{proof}{\trivlist\item[]\emph{Proof}:}%
{\unskip\nobreak\hskip 1em plus 1fil\nobreak$\Box$
\parfillskip=0pt%
\endtrivlist}

\newcommand{\cR}{{\cal R}}

\newcommand{\eps}{\varepsilon}

\newcommand{\sq}{\mathtt{square}}
\newcommand{\cell}{\mathtt{cell}}

\newcommand{\mleft}{\mathtt{left}}
\newcommand{\mright}{\mathtt{right}}
\newcommand{\mbot}{\mathtt{bot}}
\newcommand{\mtop}{\mathtt{top}}

\pagestyle{plain}

%% SHORT AND FULL VERSIONS
% short version

\newcommand{\longver}[1]{}

% long version
% \newcommand{\shortver}[1]{}
% \newcommand{\longver}[1]{#1}
% \newcommand{\shlongver}[2]{#2}

% \title{Orthogonal Range Reporting and Rectangle Stabbing for Fat Rectangles}
% \author{Timothy M. Chan}{Department of Computer Science, University of Illinois at Urbana-Champaign}{tmc@illinois.edu}{}{}
% \author{Yakov Nekrich}{Cheriton School of Computer Science, University of Waterloo}{yakov.nekrich@googlemail.com}{}{} 
% \author{Michiel Smid}{School of Computer Science, Carleton University}{michiel@scs.carleton.ca}{}{}
% \authorrunning{T.~M. Chan, Y.Nekrich, and M. Smid}

% \subjclass{Theory of computation $\rightarrow$ Data structures design and analysis}
% \Copyright{Timothy M. Chan, Yakov Nekrich, and Michiel Smid}
% \keywords{Data Structures, Range Reporting, Rectangle Stabbing}
\begin{document}
\title{Orthogonal Range Reporting and Rectangle Stabbing for Fat Rectangles}
\author{Timothy M. Chan\thanks{Department of Computer Science, University of Illinois at Urbana-Champaign   Email: {\tt tmc@illinois.edu}.} 
\and 
Yakov Nekrich \thanks{Cheriton School of Computer Science, University of Waterloo.
  Email: {\tt yakov.nekrich@googlemail.com}.} 
\and
Michiel Smid \thanks{School of Computer Science, Carleton University Email: {\tt michiel@scs.carleton.ca}.}
}
% \institute{Department of Computer Science, University of Illinois at Urbana-Champaign   \email{tmc@illinois.edu} \and
% Cheriton School of Computer Science, University of Waterloo.
%   \email{yakov.nekrich@googlemail.com} \and
% School of Computer Science, Carleton University \email{michiel@scs.carleton.ca}
% }
\date{}
\maketitle

\thispagestyle{empty}
\begin{abstract}
In this paper we study two geometric data structure problems in the special case when input objects or queries are fat rectangles.  We show that in this  case a significant improvement compared to the general case can be achieved. 

We describe  data structures that answer  two- and three-dimensional orthogonal range reporting queries in the case when the query range is a \emph{fat} rectangle.  Our two-dimensional data structure uses $O(n)$ words and supports queries in $O(\log\log U +k)$ time, where $n$ is the number of points in  the data structure,  $U$ is the size of the universe and $k$ is the number of points in the query range. %As a comparison, the best known linear-space data structure for arbitrary query rectangles needs $O(n^{\eps}+k)$ time to answer a query.
 Our three-dimensional data structure needs $O(n\log^{\eps}U)$ words of space and answers queries in $O(\log \log U + k)$ time.  We also consider the rectangle stabbing problem on a set of three-dimensional fat rectangles. Our data structure uses $O(n)$ space and answers stabbing queries in $O(\log U\log\log U  +k)$ time.  
%\keywords{Data Structures, Range Reporting, Rectangle Stabbing}
\end{abstract}
\newpage
\setcounter{page}{1}

\section{Introduction}
\label{sec:intro}

%TO DO: re-draw Figure 2, proof of Lemma 2.\\
%Definition of the quadtree.\\
%Related work: Iacono et al, storage. Other papers. 
%Alternative approach: Z-order and select 1 out of $d$. Cite and discuss briefly. 

Orthogonal range reporting and rectangle stabbing are two fundamental problems in computational geometry. In the orthogonal range reporting problem we keep a set of points in a data structure; for any axis-parallel query rectangle $Q$ we must report all points in $Q$.  Rectangle stabbing is, in a sense, a dual problem. We keep a set of axis-parallel rectangles in a data structure. For a query point $q$ we must report all rectangles that are stabbed by $q$, i.e., all rectangles that contain $q$. A rectangle is fat if its aspect ratio (the ratio of its longest and shortest edges)  is bounded by a constant. In this paper we consider the range reporting problem in scenario when query rectangles are fat.  We show that  significant improvements can be achieved for this special case.  We also describe a data structure that supports  three-dimensional stabbing queries on a set of fat three-dimensional rectangles.

% In this paper we consider the range reporting problem in two and three dimensions and show that the state-of-the-art results can be improved significantly if the query rectangle is a fat box. That is, the aspect ratio of its projections on coordinate axes is bounded by a constant.  We also show that we can obtain improvements for three-dimensional rectangle stabbing in the case when all rectangles in the data structure are fat.  

The range reporting problem and its variants have been studied extensively over the last four decades; see for example, ~\cite{GabowBT84,Chazelle86,Chaz88,VengroffV96,ArgeSV99,AlstrupBR00,Nekrich07,Nekrich07socg,Afshani08,KarpinskiN09,Chan13,ChanLP11}. We refer to~\cite{KreveldL2014,Nekrich2008} for extensive surveys of previous results.
The best known data structure for two-dimensional point reporting uses $O(n\log^{\eps}n)$ words of space and supports queries in $O(\log\log U + k)$ time~\cite{ChanLP11}. Henceforth $n$ is the total number of geometric objects (points or  rectangles) in the data structure, $k$ is the number of reported objects, and $\eps$ is an arbitrarily small positive constant; we assume that all point coordinates are positive integers bounded by a parameter $U$. The space usage can be reduced to linear or almost-linear at the cost of paying a non-constant penalty for every reported point. Thus there is an $O(n)$-word data structure that supports queries in $O(\log\log U + (k+1)\log^{\eps}n )$ time and $O(n \log\log n)$-word data structure that answers queries in $O(\log\log U + k\log\log n)$ time. If we want to use linear space and spend constant time for every reported point, then the overall query cost is increased to polynomial: the fastest linear-space data structure requires $O(n^{\eps}+k)$ time to answer a query~\cite{BentleyM80}. Better results are known only in the special case when the query range is bounded on three sides~\cite{McCreight85,AlstrupBR00}; there is a linear-space data structure that answers three-sided queries in $O(\log \log U+k)$ time (or even in $O(1+k)$ time if $U=O(n)$)~\cite{AlstrupBR00}. In this paper we show that two-dimensional orthogonal  range reporting queries can be answered in $O(\log\log U + k)$ time using an $O(n)$-space data structure under assumption that query rectangles are fat.  %This is a polynomial improvement over the best known data structure for the general case. 
 We also demonstrate that the fatness assumption is profitable for three-dimensional orthogonal range reporting.  We  show in this paper how to report all points in a three-dimensional axis-parallel fat rectangle in $O(\log\log U + k)$ time using a $O(n\log^{\eps} U)$-word data structure. This is to be compared with the result of Chan et al.~\cite{ChanLP11} that achieves the same query time for arbitrary rectangle queries but uses $O(n\log^{1+\eps}n)$ words of space. The third problem considered in this paper is the three-dimensional stabbing problem on a set of fat rectangles. For a query point $q$ we must report all rectangles that are stabbed by $q$. We describe a data structure that uses $O(n)$ words of space and supports queries in $O(\log U \log\log U + k)$ time.  For comparison, the best known data structures for general rectangles use $O(n\log^{*}n)$ words of space and support queries in $O(\log^2n)$ time~\cite{Rahul15}. 

 \begin{table}[tbh]
   \centering
   \begin{tabular}{|l|c|c|c|} \hline
     Reference & Space & Time & Query Type\\ \hline
     \cite{ChanLP11}   & $O(n)$ & $O(\log\log U + (k+1)\log^{\eps}n )$ & General\\
     \cite{ChanLP11}   & $O(n \log\log n)$ & $O(\log \log U + k\log\log n)$ & General\\       
     \cite{ChanLP11}    & $O(n\log^{\eps} n)$ & $O(\log\log U + k)$ & General\\
     \cite{BentleyM80}  & $O(n)$ & $O(n^{\eps} + k)$ & General\\
     \cite{McCreight85} & $O(n$  & $O(\log n + k)$ & Three-sided\\
     \cite{AlstrupBR00} & $O(n$  & $O(\log \log U + k)$ & Three-sided\\
     \cite{ChazelleE87} & $O(n)$ & $O(\log n + k)$ & Fat\\
     {\bf New}          & $O(n)$ & $O(\log\log U + k)$ & Fat \\\hline
    \end{tabular}
   \caption{Space-time trade-offs for two-dimensional range reporting. Result in line 7 is a corollary from~\cite{ChazelleE87}, but it is not stated there.}
   \label{tab:res2dim}
 \end{table}

Our data structure for two-dimensional range reporting, described in Sections~\ref{sec:subdivision} and~\ref{sec:fatrangerep}, is based on quadtrees. Using a marking scheme on nodes of a quadtree, we divide the plane into $O(n/d)$ canonical rectangles, so that each rectangle contains $O(d)$ points for $d=\log n$. For any fat query rectangle $Q$, we can quickly find  all canonical rectangles $R$ satisfying $Q\cap R\not=\emptyset$ and report all points in $Q\cap R$ for all such $R$.  In Section~\ref{sec:orthrep3d} we describe a data structure that supports three-dimensional range reporting for fat query ranges. It is based on recursive decomposition of the grid similar to \cite{AlstrupBR00,ChanLP11,KarpinskiN09}, but in our case the grid is divided into uniform cells.  We describe a data structure for stabbing queries on a set of three-dimensional fat rectangles in Section~\ref{sec:stab3d}. This result is based on reducing a  stabbing query to $O(\log U)$ three-dimensional dominance queries.  The results of this paper are valid in the word RAM model of computation. %However, our method for two-dimensional range reporting can be also extended to other computational models as discussed in Section~\ref{sec:pointer}. 

\myparagraph{Related Work.}
A result about range reporting in two-dimensional fat rectangles is implicitly contained in the paper of Chazelle and Edelsbrunner~\cite{ChazelleE87}. In~\cite{ChazelleE87} the authors describe a linear-space data structure for triangular range reporting. Their data structure can report all points in an arbitrary  query  triangle, provided that the  sides of the triangle parallel to three fixed directions; queries are supported in  $O(\log n+ k)$ time where $k$ is the number of reported points. We can represent a square as a union of two such triangles and we can represent an arbitrary fat rectangle  as a union of $O(1)$ squares. Hence we can answer two-dimensional range reporting queries for fat rectangles in $O(\log n+ k)$ time and $O(n)$ space. 

Data structures for fat convex objects are studied in e.g.,~\cite{Katz97,EfratKNS00,BergG08,AronovBG08}. Iacono and Langerman~\cite{IaconoL00} describe a data structure that supports point location queries in a set of axis-parallel fat $d$-dimensional rectangles. This data structure answers queries in $O(\log\log U)$ time and uses $O(n \log\log U)$ space for any fixed dimension $d$.

\section{Quadtree-Based Rectangular Subdivision}
\label{sec:subdivision}
In this section we describe a planar rectangular subdivision that is used by our two-dimensional data structure. To make the description self-contained, we start with the definition of a compressed quadtree. 

A quadtree $T_Q$ is a hierarchical data structure that divides the plane into regions. Let $U$ denote the maximum of $x$- and $y$-coordinates of all points. 
We associate a square (also called a cell) $\sq(v)$ to every quadtree node $v$. The root of a quadtree is associated to  the square $[0,U]\times [0,U]$. W. l. o. g. we assume that $U$ is a power of $2$. If a square $\sq(v)$ of a node $v$ contains more than one point, then the node $v$ has four children. We divide $\sq(v)$  into four squares of equal size and  associate them  to four child nodes of $v$. A compressed quadtree $T$ is a subtree of $T_Q$ obtained  by keeping only those internal nodes of $T_Q$ that have more than one non-empty child.

\myparagraph{Marking Nodes in a Quadtree.}
Let $T$ denote a compressed quadtree on a set of $n$ points. Let $d=\log n$. We mark selected nodes in $T$  by employing the following marking scheme: (i) every $d$-th leaf is marked and (ii) if an  internal node $u$  has at least two children with marked descendants, then $u$  is marked.  We can mark nodes of a given quadtree $T$ in linear time using the following method. We will say that a node $u$ is a special node if exactly one child of $u$ has marked descendants. First we traverse the leaves of $T$ in the left-to-right order and mark every $d$-th leaf, starting with the leftmost one. Then we visit all internal nodes of $T$ in post-order. If a visited node $u$ has exactly one child $u_i$ such that $u_i$ is either marked or special, then we declare that the node $u$ is special. If $u$ has two or more children that are either special or marked, then the node $u$ is marked. Marked nodes induce a subtree $T'$ of $T$. $T'$ has $n/d$ leaves. Since  every internal node of $T'$ has at least two children, $T'$ has at most $n/d -1$ internal nodes. Hence the total number of marked nodes is $O(n/d)$.  Similar methods for selecting nodes were previously used in other tree-based data structures, see e.g.,~\cite{NavarroN12,LewensteinNV14}.

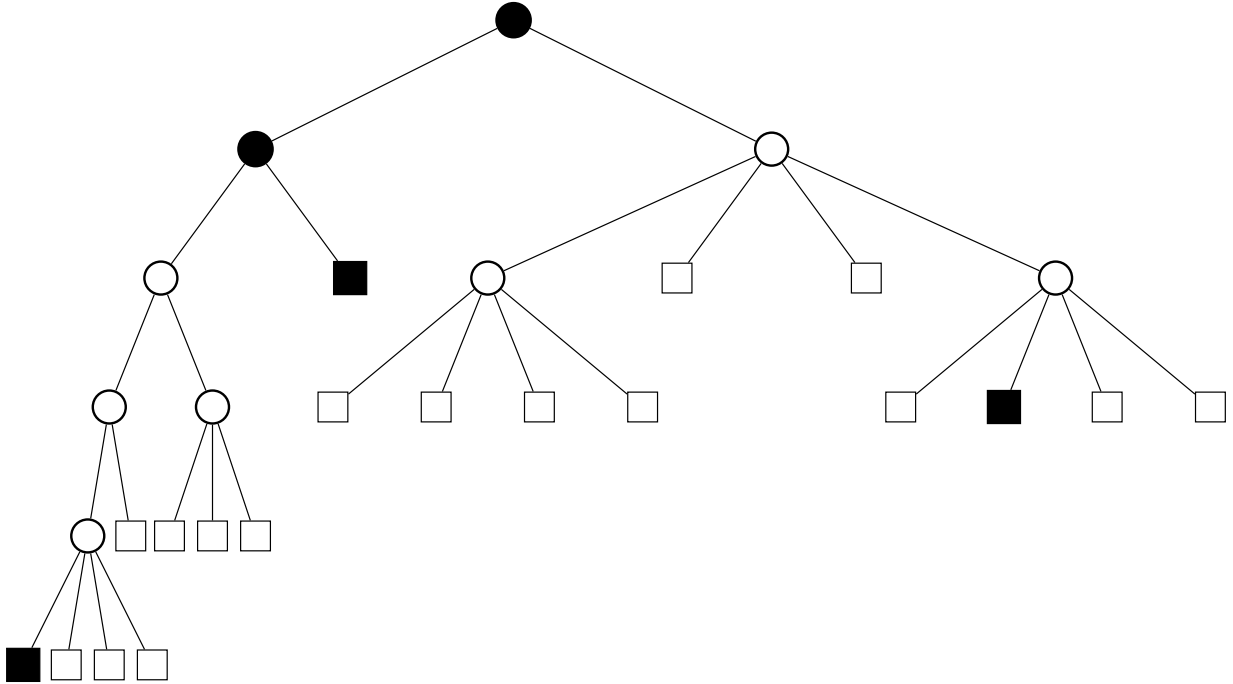
\begin{figure}[tb]
  \centering
\resizebox{\textwidth}{!}{
  \begin{tikzpicture}
    \node[int_m]{}
    child{ node[int_m]{}
      child{ node[int_u]{}      
        child{ node[int_u]{}
          child{ node[int_u]{}
            child{ node[leaf_m]{}}
            child{ node[leaf_u]{}}
            child{ node[leaf_u]{}}
            child{ node[leaf_u]{}}
          }
          child{ node[leaf_u]{}}
        }
        child{ node[int_u]{}
          child{ node[leaf_u]{}}
          child{ node[leaf_u]{}}
          child{ node[leaf_u]{}}
        } 
      }
      child{ node[leaf_m]{} 
    }
  }
  child{node[int_u]{}
    child{ node[int_u]{}
      child{ node[leaf_u]{}}
      child{ node[leaf_u]{}}
      child{ node[leaf_u]{}}
      child{ node[leaf_u]{}}
      }
      child{ node[leaf_u]{}}
      child{ node[leaf_u]{}}
      child{ node[int_u]{}
        child{ node[leaf_u]{}}
        child{ node[leaf_m]{}}
        child{ node[leaf_u]{}}
        child{ node[leaf_u]{}}
        }
  };
  \end{tikzpicture}
}
  \caption{Marking nodes in a compressed quadtree for  $d=8$. Leaves are shown with squares and internal nodes are shown with circles. Marked leaves and internal nodes are depicted with filled circles and filled squares respectively. Only a part of the quadtree is shown.}
  \label{fig:quadmark}
\end{figure}

\myparagraph{Rectangular Subdivision.} 
When nodes are marked,  we traverse $T$ from the top to the bottom and divide it into  $O(n/d)$ rectangles so  that each rectangle contains  $O(d)$ points. The subdivision is produced as follows. A direct marked descendant of a node $u$ is a descendant $u'$ of $u$ such that $u'$ is marked and there are no marked nodes between $u$ and $u'$. Suppose that a node $u$ is a marked node and let $u_1$, $\ldots$, $u_f$ denote its direct marked descendants. A marked node has at most $4$ direct marked descendants, therefore $f\le 4$. Let $\sq(u)$ denote the cell of a node $u$. We can represent $\sq(u)\setminus (\cup_{i=1}^f \sq(u_i))$ as a union of a constant number of rectangles $R_j(u)$. We will say that $R_j(u)$ are rectangles associated to  the node $u$.  See Fig.~\ref{fig:markdesc}.  
\begin{figure}[tb]
  \centering
  \includegraphics[width=.4\textwidth,page=2]{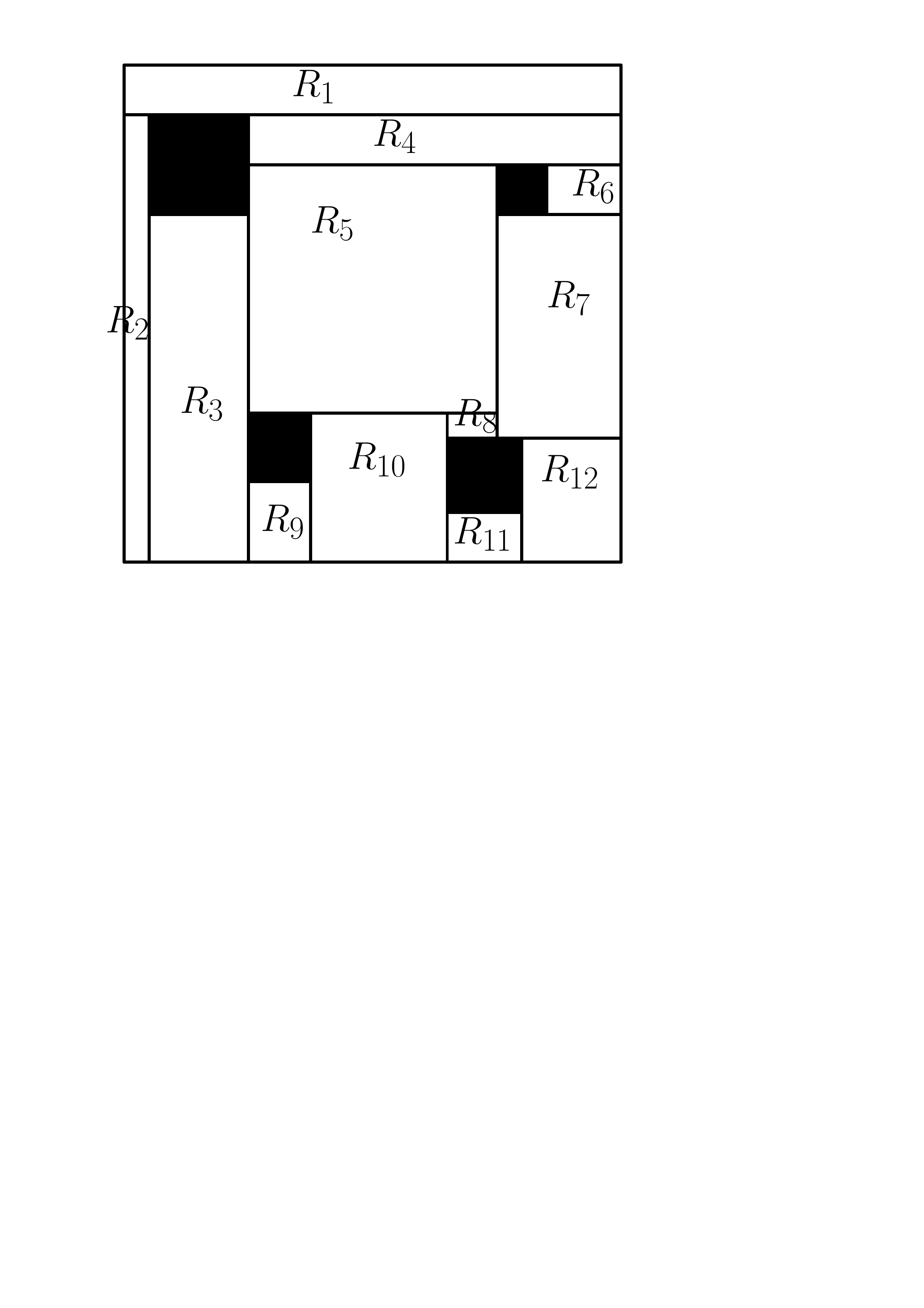}
  \caption{Subdivision of a marked cell into rectangles. Cells corresponding to direct marked descendants are shown in black. }
  \label{fig:markdesc}
\end{figure}
There are $O(n/d)$ marked nodes in the quadtree. Our subdivision consists of rectangles $R_i(u)$ for all marked internal nodes of $T$ and cells $\sq(v)$ for all marked leaves $v$ of $T$. By dividing every marked node with marked descendants into rectangles as described above, we obtain a sub-division of the plane into $O(n/d)$ rectangles. Rectangles of this subdivision will be further called \emph{canonical rectangles}.

\begin{lemma}
\label{lemma:canrect}
  Every canonical rectangle contains $O(d)$ points. 
\end{lemma}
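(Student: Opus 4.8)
The plan is to consider separately the two types of canonical rectangles: cells $\sq(v)$ of marked leaves $v$, and rectangles $R_j(u)$ associated with a marked internal node $u$. For a marked leaf $v$, recall that consecutive marked leaves are exactly $d$ leaves apart in the left-to-right order, so the subtree hanging below $v$ — more precisely, the leaves between $v$ and the next marked leaf — contains at most $d$ points; since every leaf of a compressed quadtree holds a single point, $\sq(v)$ contains $O(d)$ points. (One must be slightly careful that the cell $\sq(v)$ does not also pick up points living in sibling subtrees, but by construction $\sq(v)$ is exactly the region of those $O(d)$ leaves, so this is immediate.)

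The main case is a rectangle $R_j(u)$ for a marked internal node $u$ with direct marked descendants $u_1,\dots,u_f$. By definition of the subdivision, $R_j(u) \subseteq \sq(u)\setminus\bigcup_{i=1}^f \sq(u_i)$, so it suffices to bound the number of points in $\sq(u)$ that lie outside all the $\sq(u_i)$. First I would argue that every such point lies in a leaf $\ell$ that is a descendant of $u$ but not a descendant of any $u_i$; equivalently, on the root-to-$\ell$ path below $u$ there is no marked node. I claim there are $O(d)$ such leaves. The key observation is the marking rule (ii): if an internal node has two or more children with marked descendants, it is marked. Hence, starting from $u$ and walking down toward any unmarked leaf, at each internal node at most one child can have a marked descendant; following the ``special'' path, the nodes off that path have subtrees with no marked descendants at all, and a subtree with no marked leaf contains fewer than $d$ leaves (otherwise the $(d)$-th leaf rule would have marked one). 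I would make this precise by a small charging/counting argument: the unmarked leaves below $u$ and outside the $\sq(u_i)$ are distributed among a constant number of "pockets'' — the subtrees hanging off the at most-$O(1)$-length portion of the tree between $u$ and its direct marked descendants — and each pocket is a maximal subtree containing no marked leaf, hence has $<d$ leaves. Since there are $O(1)$ direct marked descendants and the branching is $4$-ary, the number of such pockets is $O(1)$, giving $O(d)$ points total in $\sq(u)\setminus\bigcup_i\sq(u_i)$, and a fortiori $O(d)$ in each $R_j(u)$.

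The step I expect to be the main obstacle is formalizing exactly which leaves contribute to $R_j(u)$ and showing they decompose into $O(1)$ marked-leaf-free subtrees. The subtlety is that between $u$ and a direct marked descendant $u_i$ there may be a chain of \emph{special} nodes (each with exactly one marked-descendant child), so the "pockets'' are the off-path children all along these chains, and one has to observe that the total number of such chains and hence pockets is still $O(1)$ — this follows because each chain ends at one of the $f\le 4$ direct marked descendants, and within a chain each node contributes at most $3$ off-path children, but the chains themselves could be long; the right way to see $O(1)$ pockets is that all off-path children along all chains have subtrees with no marked leaf, these subtrees are disjoint, and their total leaf count is what we are bounding. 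So the cleaner formulation is: the region $\sq(u)\setminus\bigcup_i\sq(u_i)$ is covered by the union of the cells of all these off-path children; I need the \emph{total} number of points in this union to be $O(d)$, which reduces to: a maximal subtree of $T$ containing no marked leaf has $O(d)$ leaves. That last statement is the crux and follows directly from marking rule (i) together with the fact that a compressed quadtree subtree with $m$ leaves, all of whose leaves are consecutive in the global left-to-right order, must have $m < 2d$ once we note that any block of $d$ consecutive leaves contains a marked one; I would spell out this final estimate and then combine the leaf-case and internal-case bounds to conclude. $\Box$
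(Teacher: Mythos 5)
Your strategy is the same as the paper's: reduce to counting the leaves that are descendants of $u$ but not of any direct marked descendant $u_i$, and use the rule that every $d$-th leaf is marked. However, the final accounting step as you state it does not close. You reduce ``the total number of points in all pockets is $O(d)$'' to ``a maximal subtree of $T$ containing no marked leaf has $O(d)$ leaves.'' That per-subtree bound is true but insufficient: as you yourself note, the chains of special nodes between $u$ and the $u_i$ can be long, so the number of disjoint pockets need not be $O(1)$, and summing an $O(d)$ bound over each pocket only yields $O(d)$ times the number of pockets. Nothing in your write-up rules out this blow-up; the retracted ``$O(1)$ pockets'' claim is the symptom of this.

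The fix is to drop the subtree grouping and group the leaves by their position in the global left-to-right leaf order, which is what the paper does. The leaves below $u$ form one contiguous interval of that order, and the leaves below each $u_i$ form contiguous subintervals, so the leaves being counted split into at most $f+1$ contiguous runs: those before $u_1$, those between $u_i$ and $u_{i+1}$, and those after $u_f$. None of these leaves is marked (a marked leaf below $u$ but below no $u_i$ would itself be a direct marked descendant of $u$), and since every $d$-th leaf is marked, each run has fewer than $d$ leaves; hence the total is at most $(f+1)d=O(d)$ because $f\le 4$. Your key fact (``any block of $d$ consecutive leaves contains a marked one'') is exactly the right one --- what is missing is applying it to these $f+1$ runs rather than to individual pockets. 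Your handling of marked-leaf cells is fine (a leaf of the compressed quadtree holds a single point, so that case is immediate).
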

\begin{proof}
  Consider a rectangle $R(u)$ associated to a node $u$. Let $u_1$, $\ldots$, $u_f$ denote the direct marked descendants of $u$. We can show that the set $P_0=\sq(u)\setminus (\cup_{i=1}^f \sq(u_i))$ contains $O(d)$ points. Let $L_0$ denote the set of leaves in which points from $P_0$ are stored. There are at most $d$ leaf nodes from $L_0$ between $u_i$ and $u_{i+1}$ for $1\le i <f$; there are at most $d$ leaf descendants of $u$ to the left of $u_1$ and at most $d$ leaf descendants of $u$ to the right of $u_f$.  Hence the total number of leaves in $L_0$ does not exceed $(f+2)d$. Since $R(u)\subseteq L_0$ and $f\le 4$, $R(u)$ contains $O(d)$ points. 
\end{proof}

% We say that a rectangle $R$ spans a rectangle $Q$ if $R$ crosses two opposite sides of $Q$, but does  not contain corners of $Q$; see Fig. \ref{fig:rect}. 
% \begin{lemma}
% \label{lemma:span}
%   If a rectangle $Q=[a,b]\times [c,d]$ is fat, then $Q$ is spanned by $O(1)$ canonical rectangles. 
% \end{lemma}
% \begin{proof}
%   Let $g_1=\min((b-a+1),(d-c+1))$, $g_2=\max((b-a+1),(d-c+1))$, $s_1=2^{\floor{\log_2 g_1}}$, and $s_2=2^{\ceil{\log_2 g_2}}$. We assume w.l.o.g. that $b-a\ge d-c$. 
% Suppose that a canonical rectangle $R(u)$, associated to a node $u$,  spans $Q$. Then either (1) the cell $\sq(u)$ spans $Q$, or (2) $\sq(u)$ contains two corners of $Q$ and $Q$ is not contained in $\sq(u')$ for any descendant $u'$ of $u$, or (3) $\sq(u)$ contains $Q$ but $Q$ is not contained in $\sq(u')$ for any descendant $u'$ of $u$. There is at most one cell that satisfies condition (3). If a cell satisfies conditions (1) or (2), then $s_1\le \size(u)\le s_2$. 
% For any $s$ such that $s$ is a power of $2$ and $s_1\le s\le s_2$, there are at most $s_2/s$  cells of size $s$ that span $Q$ and at most two cells that contain two corners of $Q$. Since $Q$ is a fat rectangle, $s_2/s_1=O(1)$. Hence the total number of cells that satisfy conditions (1) or (2) is bounded by $O(1)$. Since there are $O(1)$ canonical rectangles associated  to each marked cell of a quadtree, there are $O(1)$ canonical rectangles that span a fat range $Q$. 
% \end{proof}

\section{Orthogonal Range Reporting for Fat Boxes in 2-D}
\label{sec:fatrangerep}

\myparagraph{Data Structure.}
We divide the plane into canonical rectangles as described in Section~\ref{sec:subdivision}. For every rectangle $R$  in this subdivision we keep the list $L_x(R)$ of points in $R$ sorted by their $x$-coordinates  and the list $L_y(R)$ of points in $R$ sorted by their $y$-coordinates. We also keep a data structure $D(R)$ that supports two-dimensional range reporting queries on points of $R$. Since $R$ contains $O(\log n)$ points, we can implement $D(R)$ in $O(\log n)$ space  so that queries are  supported in $O(k)$ time. The data structure $D(R)$ will be described in Section~\ref{sec:small}. We will denote by $P$ the set of points stored in our data structure.

\myparagraph{Orthogonal Range Queries.}
For simplicity we will consider the case when the query range is a square. Any fat rectangle can be represented as a union of $O(1)$ squares. Consider a query $Q=[a,b]\times [c,d]$. All canonical rectangles that intersect $Q$ can be divided into three categories: (i) corner rectangles that contain a corner of $Q$ (ii) rectangles that cut one side of $Q$ or are completely contained in $Q$; such rectangles will be called internal rectangles (iii) rectangles that cross two opposite sides of $Q$, but do not contain corners of $Q$; we say that such rectangles are spanning rectangles or that type (iii) rectangles span $Q$. See Fig.~\ref{fig:rect}. 
\begin{figure}[tb]
  \centering
    \includegraphics[width=.35\textwidth]{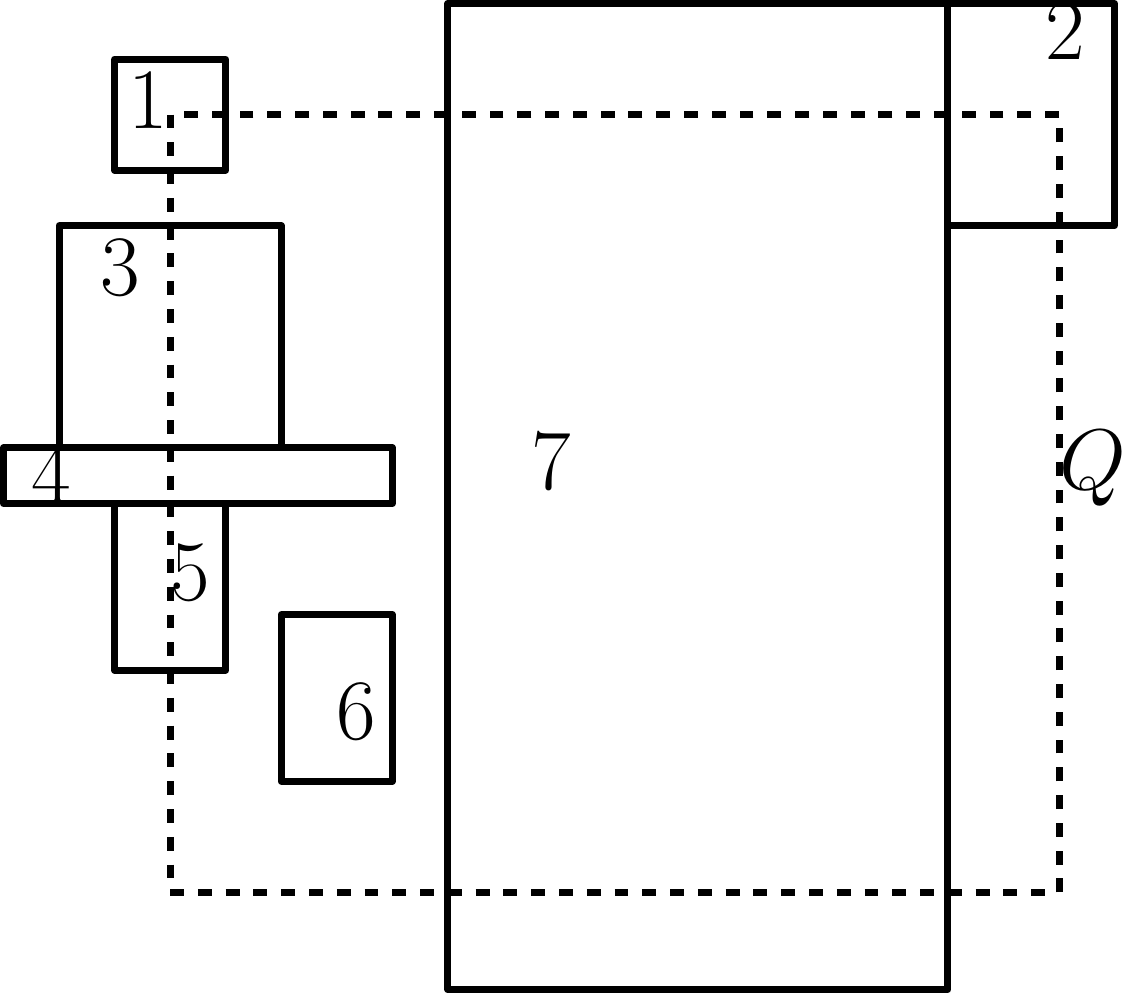}
  \caption{Examples of different rectangles with respect to a query $Q$. Rectangles 1 and 2 are corner rectangles, rectangles 3, 4, 5, 6 are internal rectangles, and rectangle 7 spans $Q$.}
  \label{fig:rect}
\end{figure}

\begin{lemma}
\label{lemma:span}
  If a rectangle $Q=[a,b]\times [c,d]$ is a square, then $Q$ is spanned by $O(1)$ canonical rectangles. 
\end{lemma}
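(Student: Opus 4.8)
The plan is to bound the number of spanning canonical rectangles using two facts: each spanning rectangle must cross two opposite sides of the square $Q$, and canonical rectangles come from a (compressed) quadtree, so their widths and heights are constrained by the quadtree grid. First I would observe that a spanning rectangle crossing, say, the two vertical sides $x=a$ and $x=b$ must have width at least $b-a$; similarly a rectangle spanning the two horizontal sides has height at least $d-c = b-a$ (using that $Q$ is a square). So every spanning rectangle has one dimension at least the side length $s := b-a$ of $Q$.

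Next I would use the structure of canonical rectangles. Each canonical rectangle $R$ is either a marked leaf cell $\sq(v)$ or a piece $R_j(u)$ of $\sq(u)\setminus(\cup_i \sq(u_i))$ for a marked internal node $u$; in either case $R$ is contained in a single quadtree cell $\sq(u)$, and moreover $R$ has the property that (at least) one of its two dimensions equals the full side length of $\sq(u)$ (the pieces obtained by removing the child subsquares from a square have this ``L-shape decomposition'' form). Combined with the previous paragraph, a spanning rectangle $R \subseteq \sq(u)$ must have $\mathtt{side}(\sq(u)) \ge s$, i.e., $\sq(u)$ is a quadtree cell of side length at least $s$. Since quadtree cells of a fixed side length $2^i$ form a grid partition of $[0,U]^2$, and $\sq(u)$ must intersect $Q$ (a square of side $s \le 2^i$), the number of quadtree cells of side $2^i$ meeting $Q$ is $O(1)$ — at most $4$, since $Q$ can straddle at most two grid lines in each direction at that scale. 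Each such cell $\sq(u)$ contributes $O(1)$ canonical rectangles (the constant number of pieces $R_j(u)$, or the single leaf cell). Summing over the $O(\log U)$ relevant scales $2^i \ge s$ would give $O(\log U)$, which is not good enough, so the argument needs to be sharpened.

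The sharpening — and the main obstacle — is to rule out, for all but $O(1)$ scales, that a cell $\sq(u)$ of side $2^i \gg s$ actually yields a *spanning* (rather than corner or internal) canonical rectangle. The key point is that a spanning rectangle must intersect two opposite sides of $Q$ but contain no corner of $Q$; so it must contain the entire segment between the two sides at some height (or width) — in particular it ``sees across'' $Q$ while staying strictly between two of $Q$'s bounding lines in the transverse direction. For a canonical piece $R_j(u) \subseteq \sq(u)$ with $\sq(u)$ much larger than $Q$: if $\sq(u) \supseteq Q$ entirely, then after removing the four child subsquares, the remaining L-shaped region is cut into pieces by the grid lines at scale $2^{i-1}$ through the center of $\sq(u)$; $Q$ lies in one child subsquare (or straddles a center line), and one checks that a piece $R_j(u)$ can span $Q$ only for $O(1)$ choices of the center-line position relative to $Q$, which pins down $i$ to within $O(1)$ values. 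If $\sq(u)$ does not contain $Q$, then $\sq(u)$ contains a corner of $Q$ (a cell of side $\ge s$ meeting $Q$ but not containing it has $Q$ poking out of one side, hence a corner of $Q$ is... actually not necessarily — $Q$ could poke out one side without a corner inside) — here I would instead argue that a boundary line of $\sq(u)$ then crosses $Q$, and since any piece $R_j(u)$ is flush against $\sq(u)$'s boundary, $R_j(u)$ either contains a corner of $Q$ or fails to reach across, contradicting ``spanning''. Making this case analysis airtight — enumerating how a quadtree cell can sit relative to the square $Q$ and which of its L-pieces can be spanning — is the crux; the rest is bookkeeping. Once each scale contributes $O(1)$ spanning rectangles and only $O(1)$ scales can contribute at all, the total is $O(1)$.
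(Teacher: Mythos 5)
Your proposal contains a genuine gap, and you flag it yourself: the reduction from ``$O(1)$ spanning rectangles per scale, over $O(\log U)$ scales'' down to $O(1)$ overall is the entire content of the lemma, and your sketch of that step does not go through. Moreover, the sketch rests on a wrong picture of what a canonical rectangle is: the pieces $R_j(u)$ are obtained by removing the cells of the \emph{direct marked descendants} of $u$, which can be arbitrarily small sub-cells at arbitrarily deep levels and in arbitrary positions inside $\sq(u)$ --- not the four children of $u$ cut by the center lines at scale $2^{i-1}$. So the case analysis you propose (``$Q$ lies in one child subsquare \ldots which pins down $i$ to within $O(1)$ values'') is analyzing the wrong decomposition. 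Your aside that every canonical rectangle has one dimension equal to the full side of $\sq(u)$ is likewise unjustified (and false in general for a square with up to four arbitrary holes), though it happens not to be needed for the one inference you draw from it. You also correctly catch, mid-argument, that a large cell meeting $Q$ without containing it need not contain a corner of $Q$, and the replacement argument you offer for that case is not completed.

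The paper's proof avoids scale-by-scale counting entirely. It classifies any node $u$ with a spanning piece $R(u)$ into two cases --- (a) $\sq(u)\supseteq Q$, or (b) $\sq(u)$ contains exactly two corners of $Q$ (so $Q$ crosses one side of $\sq(u)$) --- and then argues that \emph{each case is realized by at most one node}. For (a), the cells containing $Q$ form a nested chain, and only the minimal one can have a piece spanning $Q$: if some descendant's cell still contains $Q$, the pieces of the ancestor avoid that sub-cell and hence cannot reach across $Q$. For (b), the quadtree grid alignment is used: a descendant cell not touching the crossed side $\ell$ of $\sq(u)$ is at distance at least its own size from $\ell$, and since $Q$ is a \emph{square}, such a cell cannot contain two corners of $Q$; this again isolates a single node. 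That nesting argument is exactly the missing crux in your plan; if you want to salvage your approach, you would need to prove a statement of this form rather than enumerate positions of grid lines scale by scale.
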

\begin{proof}
  %Let $g_1=\min((b-a+1),(d-c+1))$, $g_2=\max((b-a+1),(d-c+1))$, $s_1=2^{\floor{\log_2 g_1}}$, and $s_2=2^{\ceil{\log_2 g_2}}$. We assume w.l.o.g. that $b-a\ge d-c$. 
Suppose that a canonical rectangle $R(u)$, associated to a node $u$,  spans $Q$. Then either (i) $\sq(u)$ contains two corners of $Q$ and $Q$ is not contained in $\sq(u')$ for any descendant $u'$ of $u$, or (ii) $\sq(u)$ contains $Q$ but $Q$ is not contained in $\sq(u')$ for any descendant $u'$ of $u$.  

If $Q$ is contained in $\sq(u)$ and at least one  rectangle $R(u)$ spans $Q$, then $Q$ is 
not contained in $\sq(u')$ for any descendant $u'$ of $u$. Hence there is at most one cell that satisfies condition (i).  

Suppose that $\sq(u)$ contains two corners of $Q$ and some rectangle $R(u)$ spans $Q$. Let us assume w.l.o.g. that $Q$ crosses the left side $\ell$ of $\sq(u)$. Let $u'$ be some descendant of $u$. If $\sq(u')$ for a descendant $u'$ of $u$ does not touch the left side of $\sq(u)$, then the distance from $\ell$ to $\sq(u')$ is greater than or equal to  the size of $\sq(u')$. Hence $\sq(u')$ does not contain two corners of $Q$. If $\sq(u')$ touches $\ell$ and contains two corners of $Q$, then there is no canonical rectangle $R(u)$ that spans $Q$. Thus there is at most one cell  that satisfies condition (ii).

Since there is only one cell satisfying condition (i) and only one 
cell satisfying condition (ii), the total number of canonical rectangles that span $Q$ is bounded by a constant. 

% There is at most one cell that satisfies condition (3). If a cell satisfies conditions (1) or (2), then $s_1\le \size(u)\le s_2$. 
% For any $s$ such that $s$ is a power of $2$ and $s_1\le s\le s_2$, there are at most $s_2/s$  cells of size $s$ that span $Q$ and at most two cells that contain two corners of $Q$. Since $Q$ is a fat rectangle, $s_2/s_1=O(1)$. Hence the total number of cells that satisfy conditions (1) or (2) is bounded by $O(1)$. Since there are $O(1)$ canonical rectangles associated  to each marked cell of a quadtree, there are $O(1)$ canonical rectangles that span a fat range $Q$. 
\end{proof}

A query range can overlap with a large number of internal rectangles. But we can find all internal rectangles $R$, such that $R\cap Q\cap P\not=\emptyset$ by answering a range reporting query on a set $P'$ (defined below) that contains $O(n/d)$ representative points for $d=\log n$.  It was shown in Lemma~\ref{lemma:span} that a   square  is intersected by $O(1)$ spanning rectangles. There are at most four corner rectangles for any query range $Q$.  Since there is a constant number of corner rectangles and spanning rectangles, we can process all of them in constant time. A more detailed description follows.

We can identify all internal rectangles  (type (ii) rectangles) that contain at least one point from $P\cap Q$ as follows.  For every canonical rectangle, we keep its topmost point, its lowermost point, its leftmost point, and its rightmost point in the set $P'$. $P'$ contains $O(n/d)$ points. We keep $P'$ in the data structure $D'$ that supports orthogonal range reporting queries in $O(\log\log U + k)$ time~\cite{ChanLP11}. $D'$ uses space $O(n' \log^{\eps}n')$, where $n'$ is the number of points in $P'$. Since $n'=O(n/d)$, $D'$ uses space $O(n)$. If $rect(u)$ is an internal  rectangle  and  $rect(u)\cap Q\cap P\not=\emptyset$, then at least one of its extreme points is in $Q$. We can find all such rectangles by answering the same query $Q$ on the set $P'$. For every reported point $p$, we examine the canonical rectangle $R_p$ that contains $p$.  

There are at most four corner rectangles.  We can find corner rectangles  by keeping all canonical rectangles in the point location data structure of Chan~\cite{Chan13}. For each corner point $q$ of $Q$, we identify the rectangle $R_q$ that contains $q$ in $O(\log\log U)$ time.

Rectangles that span $Q$ are the most difficult to deal with. All  points of a spanning rectangle can be outside of $Q$. It is not clear how we can find spanning rectangles $R$ such that $R\cap Q\cap P \not=\emptyset$.   Existence of these rectangles is the reason why our method cannot be extended to the general case of the orthogonal range reporting.  However, by Lemma~\ref{lemma:span}, a square query range $Q$  is spanned by $O(1)$ canonical rectangles from our subdivision.  
All rectangles that span $Q$ can be found as follows. For a rectangle $R$ we denote by $\mleft(R)$, $\mright(R)$, $\mbot(R)$, and $\mtop(R)$ the lower and upper bounds of its horizontal and vertical projections; that is, $R= [\mleft(R),\mright(R)]\times [\mbot(R),\mtop(R)]$. If a rectangle $R$ spans $Q$, then at least one side of $R$ spans $Q$. That is, $R$ satisfies one of the following conditions:  (i) $\mleft(R)\le a$, $\mright(R)\ge b$, and $c\le \mtop(R)\le d$; (ii) $\mleft(R)\le a$, $\mright(R)\ge b$, and $c\le \mbot(R)\le d$; (iii) $a\le \mleft(R)\le b$, $\mbot(R)\le c$, and $\mtop(R)\ge d$; (iv) $a\le \mright(R)\le b$, $\mbot(R)\le c$, and $\mtop(R)\ge d$. 
We keep information about every rectangle in four three-dimensional data structures. The data structure $\cR_1$ contains a tuple 
$(\mleft(R),\mright(R), \mtop(R))$ for every canonical rectangle $R$. $\cR_1$ can find all $R$ that satisfy  $\mleft(R)\le a$, $\mright(R)\ge b$, and $c\le \mtop(R)\le d$.  The data structure $\cR_2$ contains a tuple 
$(\mleft(R),\mright(R), \mbot(R))$ for every canonical rectangle $R$. $\cR_2$ can find all $R$ that satisfy  $\mleft(R)\le a$, $\mright(R)\ge b$, and $ c \le \mbot(R)\le d$. Data structures $\cR_3$ and $\cR_4$ contain tuples $(\mleft(R), \mbot(R),\mtop(R))$ and $(\mright(R),\mbot(R),\mtop(R))$ respectively for every canonical rectangle $R$. $\cR_3$ supports queries $a\le \mleft(R)\le b$, $\mbot(R)\le c$, and $\mtop(R)\ge d$; $\cR_4$ supports queries $a\le \mright(R)\le b$, $\mbot(R)\le c$, and $\mtop(R)\ge d$. 
Queries supported by data structures $\cR_i$ are a special case of three-dimensional orthogonal range reporting queries, called 4-sided queries (the query range is bounded on four sides). Using the result of Chan et al.~\cite{ChanLP11}, we can answer such queries in $O(\log\log U+k)$ time using $O(n'\log^{\eps}n)$ space where $n'=O(n/d)$ is the number of tuples in $\cR_i$. If a rectangle $R$ is returned by a query to $\cR_i$, then $R$ spans $Q$ or $R$ contains two corners of $Q$.  If $Q$ is a square, then we can answer all queries on $\cR_i$ described above and identify all canonical rectangles that span $Q$ in $O(\log\log U + f)=O(\log\log U)$ time, where $f=O(1)$ is the number of canonical rectangles that span $Q$. 

For every corner or  spanning rectangle $R$,  we find all points in $R\cap Q$ using the data structure $D(R)$. Since the total number of corner and spanning rectangles is bounded by $O(1)$, we can find all relevant points in $O(k)$ time.  Using data structure $D'$ we can find all internal rectangles in  $O(\log\log U +n_I)$ time where $n_I$ is the number of internal rectangles. For every internal rectangle $R_I$ we traverse the list of points in $L_x(R_I)$ or $L_y(R_I)$ and report all points in $R_I\cap Q$ in time $O(k_I)$ where $k_I=|R_I\cap Q|$. 
The result of this section can be summed up as follows.
\begin{theorem}
  \label{theor:fatrangerep}
There is a linear-space data structure that answers orthogonal range reporting queries in $O(\log\log U + k)$ time provided the query range $Q=[a,b]\times [c,d]$ is a fat rectangle. 
\end{theorem}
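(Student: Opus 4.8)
The plan is to assemble the components built above into one query algorithm and then check the time and space bounds. First I would reduce to the case of a square query: a fat rectangle is a union of $O(1)$ axis-parallel squares, so it suffices to answer a square query $Q=[a,b]\times[c,d]$ (with $b-a=d-c$) and take the union of the answers; each point is then reported $O(1)$ times, which costs only a constant factor and duplicates can be suppressed. Recall from Section~\ref{sec:subdivision} that the plane is partitioned into $O(n/d)$ canonical rectangles with $d=\log n$, each containing $O(d)$ points (Lemma~\ref{lemma:canrect}). Since every point of $P\cap Q$ lies in exactly one canonical rectangle, it is enough to enumerate every canonical rectangle $R$ with $R\cap Q\cap P\neq\emptyset$ and, for each such $R$, report $R\cap Q\cap P$. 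Every canonical rectangle meeting $Q$ is of corner, internal, or spanning type, so I would handle the three types separately.

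Next I would argue that each type can be found within the claimed budget. \emph{Corner} rectangles: there are at most four, and each is located from a corner of $Q$ using Chan's point-location structure~\cite{Chan13} on the $O(n/d)$ canonical rectangles, costing $O(\log\log U)$ in total. \emph{Spanning} rectangles: by Lemma~\ref{lemma:span} there are only $O(1)$ of them when $Q$ is a square, and each has a side that spans $Q$, hence satisfies one of the four three-sided inequalities listed just before the theorem; querying the four structures $\cR_1,\dots,\cR_4$ (which store, for every canonical rectangle, the appropriate triple of bounding-box coordinates and support $4$-sided range reporting in $O(\log\log U+k)$ time and $O(n)$ space via~\cite{ChanLP11}, since each holds $O(n/d)$ tuples) returns a set of size $O(1)$ containing all spanning rectangles, from which we discard any rectangle that does not actually meet $Q$. \emph{Internal} rectangles: here I would rely on the observation that if an internal rectangle $R$ cuts, say, the left side of $Q$, then — because $R$ contains no corner of $Q$ and is not spanning — its bounding box satisfies $\mbot(R)>c$, $\mtop(R)<d$ and $\mright(R)<b$, while $R\cap Q\cap P\neq\emptyset$ forces the rightmost point of $R$ to have $x$-coordinate in $[a,\mright(R)]$; hence that extreme point lies in $Q$ (symmetrically for the other three sides, and trivially when $R\subseteq Q$). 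Thus every internal rectangle meeting $Q\cap P$ has at least one of its four extreme points in $Q$; querying the set $P'$ of all these extreme points with $Q$, using the $O(n)$-space, $O(\log\log U+k)$-time structure $D'$ of~\cite{ChanLP11} on the $O(n/d)$ points of $P'$, reports all of them, and for each reported extreme point we look up the canonical rectangle containing it.

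Finally I would do the accounting. For each of the $O(1)$ corner and spanning rectangles $R$ we invoke $D(R)$ (Section~\ref{sec:small}) to list $R\cap Q\cap P$ in $O(|R\cap Q\cap P|)$ time. For each internal rectangle $R$ discovered we report $R\cap Q\cap P$ by scanning $L_x(R)$ or $L_y(R)$ from the appropriate end, in $O(|R\cap Q\cap P|+1)$ time, which is $O(|R\cap Q\cap P|)$ because such an $R$ contributes at least one output point. Moreover each reported extreme point is itself an output point of its rectangle, so the number of distinct internal rectangles found, and the total number of reported extreme points, are both $O(k)$; hence the $D'$-query together with all per-rectangle work costs $O(\log\log U+k)$, and the overall query time is $O(\log\log U+k)$. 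For space: the $O(n/d)$ structures $D(R)$ use $O(d)$ words each, the sorted lists $L_x,L_y$ use $O(n)$ words total, and $D'$, $\cR_1,\dots,\cR_4$ and the point-location structure all operate on $O(n/d)$ items and hence use $O((n/d)\log^{\eps}(n/d))=O(n)$ words; the total is $O(n)$. I expect the main obstacle to be the internal-rectangle step — establishing that querying only the $O(n/d)$ extreme points in $P'$ already discovers every internal rectangle that contributes to the answer — since this is precisely what makes the algorithm output-sensitive without enumerating the possibly $\Theta(n)$ canonical rectangles that overlap $Q$; the corner and spanning cases follow directly from Lemma~\ref{lemma:span} and the cited data structures.
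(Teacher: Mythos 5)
Your proposal is correct and follows essentially the same route as the paper: the same reduction to squares, the same corner/internal/spanning classification with Lemma~\ref{lemma:span}, the same use of the extreme-point set $P'$ with $D'$, the structures $\cR_1,\dots,\cR_4$ for spanning rectangles, point location for corner rectangles, and the local structures $D(R)$, $L_x(R)$, $L_y(R)$ for reporting. The only difference is that you explicitly justify the claim that an internal rectangle meeting $Q\cap P$ has an extreme point inside $Q$, which the paper merely asserts; your justification is sound.
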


\section{Orthogonal Range Reporting for Fat Boxes in 3-D}
\label{sec:orthrep3d}
In this section, we describe a data structure for 3-d orthogonal
range reporting for fat query boxes, by adopting
a recursive grid approach.  Nonuniform grids have been
used in previous range searching data structures
by Alstrup, Brodal, and Rauhe~\cite{AlstrupBR00} and Chan,
Larsen, and Patrascu~\cite{ChanLP11}, but we use uniform grids
instead.  Also, the way we use
recursion is a little different, and
more closely resembles the recursion from van Emde
Boas trees.  Each node in our recursive structure is
augmented with a general 5-sided range reporting structure;
thus, our solution can be viewed as a reduction from fat
6-sided range searching to 5-sided range searching.

\myparagraph{The data structure.}
Let $P$ be a given set of $n$ points in $[U]^3$, where $[U]$ denotes $\{0,1,\ldots,U-1\}$.
Let $r$ be a parameter  (a function of $U$) to be chosen later.
Divide $[U]^3$ into $r^3$ \emph{grid cells}, each  a cube 
of the form $\{(x,y,z) : (U/r)i \le x < (U/r)(i+1),\ 
(U/r)j \le y < (U/r)(j+1),\
(U/r)k \le z < (U/r)(k+1)\}$ for some $i,j,k\in [r]$.
We call $(i,j,k)$ the label of such a grid cell.
A \emph{grid slab} refers to a region of the form $\{(x,y,z):
(U/r)i \le x < (U/r)(i+1)\}$, $\{(x,y,z):
(U/r)j \le y < (U/r)(j+1)\}$, or $\{(x,y,z):
(U/r)k \le z < (U/r)(k+1)\}$.
A \emph{grid-aligned box} refers to a box whose $x$-, $y$-,
and $z$-coordinates are all multiples of $U/r$.
We construct our data structure as follows:
\begin{enumerate}
\item[A.] For each nonempty grid cell $\gamma$,
recursively build a data structure for $P\cap\gamma$; also
store $P\cap\gamma$ as a linked list.
\item[B.] Let $\Gamma$ be the set of all nonempty grid cells.
% (grid cells that contain at least one point of $P$).
Recursively build a data structure for the labels of $\Gamma$.
\item[C.] For each grid slab $\sigma$,
build Chan, Larsen, and Patrascu's data structure~\cite{ChanLP11} for $P\cap\sigma$ for 3-d 5-sided queries, which requires
$O(n\log^\eps n)$ words of space and $O(\log\log U)$ query time\footnote{For simplicity, we ignore the time needed to output points in this section.}.
\end{enumerate}

\myparagraph{Analysis of space.}
Since we use a uniform grid, we will represent the space usage and query time as functions of the universe size $U$.
Let $s(U)$ be the \emph{amortized} space complexity of our data structure in bits, i.e., the total space complexity in bits
divided by the number of points $n$.
Item~A of the data structure requires at most $s(U/r)$ bits per point, since after translation, each grid cell becomes $[U/r]^3$.
This ignores the space for the linked lists, which require a total of $O(n\log U)$ bits.
Item~B requires at most $s(r)$ bits per point, since the labels lie in $[r]^3$.
Item~C requires a total of $O(n\log^\eps n\log U)\le O(n\log^{1+\eps} U)$ bits (since $n\le U^3$).
Thus,
\[ s(U) \ \le\ s(U/r) + s(r) + O(\log^{1+\eps} U).
\]

\myparagraph{Query algorithm.}
We consider the case when the query range is an (axis-parallel) cube; any
fat query box can be expressed as a union of $O(1)$ cubes.
Given a query cube $Q$, we report all points of $P$ in $Q$ as follows:
\begin{enumerate}
\item If $Q$ is completely contained in a grid cell $\gamma$, then
recursively report all points of $P\cap\gamma$ in $Q$.  Otherwise:
\item Decompose $Q$ into (at most) one grid-aligned cube $Q'$ and (at most)
six other boxes $Q_1,\ldots,Q_6$, where each $Q_i$ is a 5-sided
box in a grid slab $\sigma_i$.  (See Figure~\ref{fig1} for an analogous 2-d depiction.)
\item Recursively report all grid cells of $\Gamma$ in $Q'$.
For each reported grid cell $\gamma\in \Gamma$, report all points in the linked list $P\cap \gamma$.
\item For each $i\in\{1,\ldots,6\}$, report all points
of $P\cap\sigma_i$ in $Q_i$.
\end{enumerate}

\begin{figure}
\begin{center}
\includegraphics[scale=1.5]{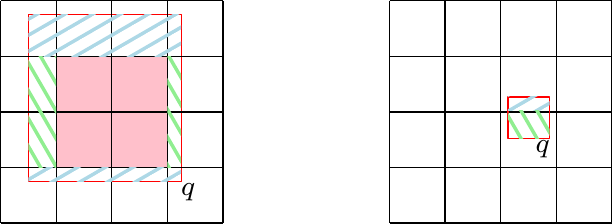}
\end{center}
\caption{In 2-d, if a query square $Q$ is not completely contained 
in any grid cell, it can be decomposed
into one grid-aligned square and four 3-sided rectangles
in four grid slabs (shown in the left), or just two 3-sided rectangles (shown in the right).  Similarly, in 3-d,
if a query cube $Q$ is not completely contained 
in any grid cell, it can be decomposed
into one grid-aligned cube and six 5-sided rectangles
in six grid slabs, or just two 5-sided rectangles.
}
\label{fig1}
\end{figure}

\myparagraph{Analysis of query time.}
Let $t(U)$ denote the running time of the query algorithm, excluding the outputting cost (which is $O(k)$ for $k$ output points).
Step~1 takes $t(U/r)$ time.
The recursive call in step~3 takes $t(r)$ time.
Step~2 takes $O(\log\log U)$ time.
Thus,
\[ t(U)\ \le\ \max\left\{t(U/r) + O(1),\ t(r) + O(\log\log U)\right\}.
\]

We can eliminate the $O(1)$ in the first term of the max by the following idea:
Consider the tree formed by expanding the recursion
due to item~A (treating the recursive structures from item~B as secondary structures at the nodes of the main tree).
Then we can jump to the first
node of the tree at which $Q$ is not completely contained in 
a grid cell, in $O(1)$ time by an LCA operation in the tree (actually, because the tree is perfectly balanced,
for our choice of $r$ (see below), the LCA operation can be simulated by standard arithmetic and bitwise-logical operations on the coordinates of~$Q$).

\myparagraph{Conclusion.}
Setting $r=U^{1/b}$ for a fixed parameter $b$ gives
\begin{eqnarray*}
s(U)  &\le& s(U^{1-1/b}) + s(U^{1/b}) + O(\log^{1+\eps} U)\\
t(U)  &\le& \max\left\{t(U^{1-1/b}),\ t(U^{1/b}) + O(\log\log U)\right\},
\end{eqnarray*}
which solves to $s(U)=O(b\log^{1+\eps} U)$ and
$t(U)=O(\log_b\log U\cdot\log\log U)$.  The outputting cost
goes up to $O(k\log_b\log U)$, since each point may be reported
$O(\log_b\log U)$ times.

Setting $b=\log^\eps U$ gives $O(\log^{1+2\eps}U)$ amortized space in bits and $O(\log\log U+k)$ query time.  Readjusting $\eps$ by a half,
we conclude: 
\begin{theorem}
We can store $n$ points in $[U]^3$ in a data structure with
$O(n\log^\eps U)$ words of space so that we can report
all $k$ points in any query fat box in $O(\log\log U + k)$ time.
\end{theorem}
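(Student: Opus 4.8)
The plan is to turn the recursive uniform-grid scheme outlined above into a clean induction on the universe size and then balance the two resulting recurrences. One level of the structure for a point set $P\subseteq[U]^3$, after fixing the branching parameter $r=r(U)$, consists of: (A)~for every nonempty grid cell $\gamma$ (a cube of side $U/r$), a recursively built structure on $P\cap\gamma$ together with a linked list of $P\cap\gamma$; (B)~a recursively built structure on the $O(n)$ cell \emph{labels}, which lie in $[r]^3$; and (C)~for each of the $O(r)$ grid slabs, the $5$-sided $3$-d range reporting structure of~\cite{ChanLP11}, costing $O(n\log^\eps n)$ words and $O(\log\log U)$ query time. Measuring space per point in bits, (A) contributes $s(U/r)$, (B) contributes $s(r)$, the linked lists contribute $O(\log U)$ bits per point (negligibly, since the recursion will have $O(1)$ depth), and (C) contributes $O(\log^{1+\eps}U)$ bits per point because $\log n\le 3\log U$; hence $s(U)\le s(U/r)+s(r)+O(\log^{1+\eps}U)$.

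The query for a cube $Q$ (a general fat box being $O(1)$ cubes) rests on the geometric claim that either $Q$ lies inside a single grid cell, or, rounding each coordinate interval of $Q$ inward to multiples of $U/r$, $Q$ partitions into one grid-aligned cube $Q'$ (possibly empty) and at most six pairwise-disjoint leftover boxes, the $i$-th of which lies inside one grid slab $\sigma_i$ and is $5$-sided within it (its extent in the thin direction of $\sigma_i$ being forced). In the first case recurse into the cell; in the second, recurse on the labels to enumerate the cells of $\Gamma$ contained in $Q'$ and dump their linked lists, and answer one $5$-sided query in each $\sigma_i$. Excluding output, step~1 costs $t(U/r)$, the label recursion costs $t(r)$, and the six slab queries cost $O(\log\log U)$, so $t(U)\le\max\{t(U/r)+O(1),\ t(r)+O(\log\log U)\}$; the $O(1)$ in the first branch is removed by noting that the cascade of steps in which $Q$ still lies inside some cell descends a perfectly balanced tree, so one can jump straight to the deepest cell containing $Q$ with an LCA computation, which for our $r$ reduces to arithmetic and bitwise operations on the coordinates of $Q$.

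Then I would solve the recurrences with $r=U^{1/b}$ for a parameter $b$ to be fixed. The small branch replaces $\log U$ by $(\log U)/b$, so the recursion reaches a constant-size universe after $O(\log_b\log U)$ steps; unrolling the time recurrence gives $t(U)=O(\log_b\log U\cdot\log\log U)$, and since an output point can be reported $O(1)$ times per recursion level the total output cost is $O(k\log_b\log U)$. For space, a one-line induction with the ansatz $s(U)=cb\log^{1+\eps}U$ works: in $s(U^{1-1/b})+s(U^{1/b})+O(\log^{1+\eps}U)$ the first term loses $\Theta(\log^{1+\eps}U)$ from the factor $(1-1/b)^{1+\eps}$, which for $c$ large enough dominates the $O(\log^{1+\eps}U/b^\eps)$ second term and the additive term, so $s(U)=O(b\log^{1+\eps}U)$ bits per point. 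Finally set $b=\log^\eps U$, so that $\log_b\log U=O(1/\eps)=O(1)$: the query time is $O(\log\log U)$ plus $O(k)$ output, and the space is $O(\log^{1+2\eps}U)$ bits per point, i.e.\ $O(n\log^{2\eps}U)$ words; readjusting $\eps$ yields the claimed $O(n\log^\eps U)$ words and $O(\log\log U+k)$ time.

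The crux, I expect, is making the two reductions coexist: the LCA shortcut that kills the $O(1)$ term needs the cell-recursion tree to be a perfectly balanced tree over $[U]$ (so the jump is cheap arithmetic), while the $O(\log_b\log U)$ depth bound needs $\log U$ to shrink geometrically, forcing $r$ to be a fixed root $U^{1/b}$ rather than a constant; and the output multiplicity --- a point possibly re-reported at every level --- is what compels $b$ to be as large as $\log^\eps U$ so that the depth, hence the multiplicity, collapses to $O(1)$. Verifying the $5$-sided decomposition of $Q$ and running the space induction, by contrast, should be routine once the per-level $O(\log^{1+\eps}U)$-bit charge from item~(C) is in place.
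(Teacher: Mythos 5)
Your proposal is correct and follows essentially the same route as the paper: the same three-part structure (per-cell recursion, label recursion, per-slab 5-sided structures of Chan--Larsen--Patrascu), the same two recurrences with the LCA shortcut to remove the additive $O(1)$, and the same balancing $r=U^{1/b}$, $b=\log^\eps U$. The only addition is your explicit verification of the space recurrence's solution, which the paper leaves implicit; it is accurate.
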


\myparagraph{Remark.}
The above theorem can't be improved with current
state of the art, because
3-d 4-sided range reporting reduces to our problem and
the current best data structure for the former requires
$O(n\log^\eps n)$ space and $O(\log\log U + k)$ query time.
To see the reduction, note that
a 4-sided box $[x_1,x_2)\times (-\infty,y)\times (-\infty,z)$
contains the same points as the cube
$[x_1,x_2)\times [y-(x_2-x_1),y)\times [z-(x_2-x_1),z)$,
assuming that $x_2-x_1>U$.  The assumption can be guaranteed
after stretching the $x$-axis by a factor of $U$ (so that
the universe is now $[U^2]\times [U]\times [U]$).

\bibliographystyle{abbrv}%{plain}
\bibliography{stretch}

\begin{thebibliography}{10}

\bibitem{Afshani08}
P.~Afshani.
\newblock On dominance reporting in 3d.
\newblock In {\em Proc.\ 16th Annual European Symposium on Algorithms({ESA})},
  pages 41--51, 2008.

\bibitem{AlstrupBR00}
S.~Alstrup, G.~S. Brodal, and T.~Rauhe.
\newblock New data structures for orthogonal range searching.
\newblock In {\em Proc. 41st Annual Symposium on Foundations of Computer
  Science, ({FOCS} 2000)}, pages 198--207, 2000.

\bibitem{ArgeSV99}
L.~Arge, V.~Samoladas, and J.~S. Vitter.
\newblock On two-dimensional indexability and optimal range search indexing.
\newblock In {\em Proc. 18th ACM SIGACT-SIGMOD-SIGART Symposium on Principles
  of Database Systems (PODS)}, pages 346--357, 1999.

\bibitem{AronovBG08}
B.~Aronov, M.~de~Berg, and C.~Gray.
\newblock Ray shooting and intersection searching amidst fat convex polyhedra
  in 3-space.
\newblock {\em Comput. Geom.}, 41(1-2):68--76, 2008.

\bibitem{BentleyM80}
J.~L. Bentley and H.~A. Maurer.
\newblock Efficient worst-case data structures for range searching.
\newblock {\em Acta Inf.}, 13:155--168, 1980.

\bibitem{Chan13}
T.~M. Chan.
\newblock Persistent predecessor search and orthogonal point location on the
  word {RAM}.
\newblock {\em {ACM} Trans. Algorithms}, 9(3):22, 2013.

\bibitem{ChanLP11}
T.~M. Chan, K.~G. Larsen, and M.~Patrascu.
\newblock Orthogonal range searching on the {RAM}, revisited.
\newblock In {\em Proc. 27th ACM Symposium on Computational Geometry, (SoCG
  2011)}, pages 1--10, 2011.

\bibitem{Chazelle86}
B.~Chazelle.
\newblock Filtering search: a new approach to query-answering.
\newblock {\em SIAM J. Comput.}, 15(3):703--724, 1986.

\bibitem{Chaz88}
B.~Chazelle.
\newblock A functional approach to data structures and its use in
  multidimensional searching.
\newblock {\em SIAM J. Comput.}, 17(3):427--462, 1988.

\bibitem{ChazelleE87}
B.~Chazelle and H.~Edelsbrunner.
\newblock Linear space data structures for two types of range search.
\newblock {\em Discrete {\&} Computational Geometry}, 2:113--126, 1987.

\bibitem{BergG08}
M.~de~Berg and C.~Gray.
\newblock Vertical ray shooting and computing depth orders for fat objects.
\newblock {\em {SIAM} J. Comput.}, 38(1):257--275, 2008.

\bibitem{EfratKNS00}
A.~Efrat, M.~J. Katz, F.~Nielsen, and M.~Sharir.
\newblock Dynamic data structures for fat objects and their applications.
\newblock {\em Computational Geometry}, 15(4):215 -- 227, 2000.

\bibitem{GabowBT84}
H.~N. Gabow, J.~L. Bentley, and R.~E. Tarjan.
\newblock Scaling and related techniques for geometry problems.
\newblock In {\em Proc.\ 16th Annual ACM Symposium on Theory of Computing (STOC
  1984)}, pages 135--143, 1984.

\bibitem{IaconoL00}
J.~Iacono and S.~Langerman.
\newblock Dynamic point location in fat hyperrectangles with integer
  coordinates.
\newblock In {\em Proc. 12th Canadian Conference on Computational Geometry},
  2000.

\bibitem{KarpinskiN09}
M.~Karpinski and Y.~Nekrich.
\newblock Space efficient multi-dimensional range reporting.
\newblock In {\em Proc. 15th Annual International Conference on Computing and
  Combinatorics ({COCOON} 2009)}, pages 215--224, 2009.

\bibitem{Katz97}
M.~J. Katz.
\newblock 3-d vertical ray shooting and 2-d point enclosure, range searching,
  and arc shooting amidst convex fat objects.
\newblock {\em Computational Geometry}, 8(6):299 -- 316, 1997.

\bibitem{KreveldL2014}
M.~v. Kreveld and M.~L{\"o}ffler.
\newblock {\em Range Searching}, pages 1--7.
\newblock Springer Berlin Heidelberg, Berlin, Heidelberg, 2014.

\bibitem{LewensteinNV14}
M.~Lewenstein, Y.~Nekrich, and J.~S. Vitter.
\newblock Space-efficient string indexing for wildcard pattern matching.
\newblock In {\em Proc.\ 31st International Symposium on Theoretical Aspects of
  Computer Science ({STACS} 2014)}, pages 506--517, 2014.

\bibitem{McCreight85}
E.~M. McCreight.
\newblock Priority search trees.
\newblock {\em {SIAM} J. Comput.}, 14(2):257--276, 1985.

\bibitem{NavarroN12}
G.~Navarro and Y.~Nekrich.
\newblock Top-\emph{k} document retrieval in optimal time and linear space.
\newblock In {\em Proc.\ 23rd Annual {ACM-SIAM} Symposium on Discrete
  Algorithms ({SODA} 2012}, pages 1066--1077, 2012.

\bibitem{Nekrich07socg}
Y.~Nekrich.
\newblock A data structure for multi-dimensional range reporting.
\newblock In {\em Proc.\ 23rd {ACM} Symposium on Computational Geometry,
  ({SoCG})}, pages 344--353, 2007.

\bibitem{Nekrich07}
Y.~Nekrich.
\newblock Space efficient dynamic orthogonal range reporting.
\newblock {\em Algorithmica}, 49(2):94--108, 2007.

\bibitem{Nekrich2008}
Y.~Nekrich.
\newblock {\em Orthogonal Range Searching on Discrete Grids}, pages 1--6.
\newblock Springer US, Boston, MA, 2008.

\bibitem{Rahul15}
S.~Rahul.
\newblock Improved bounds for orthogonal point enclosure query and point
  location in orthogonal subdivisions in $\mathbb{R}^3$.
\newblock In {\em Proc.\ 26th Annual {ACM-SIAM} Symposium on Discrete
  Algorithms ({SODA} 2015)}, pages 200--211, 2015.

\bibitem{VengroffV96}
D.~E. Vengroff and J.~S. Vitter.
\newblock Efficient 3-d range searching in external memory.
\newblock In {\em Proc.\ 28th Annual {ACM} Symposium on the Theory of Computing
  ({STOC} 1996)}, pages 192--201, 1996.

\end{thebibliography}

\appendix

\section{Orthogonal Range Reporting on a Small Set of Points}
 \label{sec:small}
In this section we show how two-dimensional orthogonal range reporting on a set of $d=O(\log n)$ points can be supported in $O(k)$ time. Our data structure
 uses  space $O(d)$, but needs an additional universal look-up table of size $o(n)$. That is,  we can keep many instances of our data structure for different point sets and  all instances can use the same look-up table.
 \begin{lemma}
   \label{lemma:small}
If a set $P$ contains $d=O(\log n)$ points, then we can keep $P$ in a linear-space data structure $D(P)$ that answers two-dimensional range reporting queries in $O(k)$ time. This data structure relies on a universal look-up table of size $o(n)$.
 \end{lemma}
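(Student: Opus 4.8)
The goal is to report, in $O(k)$ time, all points of a set $P$ of $d=O(\log n)$ points lying in a query rectangle $Q=[a,b]\times[c,d']$, using only $O(d)$ words plus a shared look-up table of size $o(n)$. The plan is to first rank-reduce the coordinates: since $|P|=O(\log n)$, the distinct $x$-coordinates (and $y$-coordinates) of $P$ can be stored in two sorted arrays of $O(\log n)$ machine words, and a query $x$-range $[a,b]$ can be converted to a rank range $[a',b']\subseteq\{0,\ldots,d-1\}$ by binary search in $O(\log d)=O(\log\log n)$ time — but that is too slow, so instead I would use the fact that $d$ fits in a word and pack the sorted coordinate list into $O(1)$ words, then locate $a$ and $b$ among them with a word-parallel predecessor computation (a standard $O(1)$-time operation once the values are suitably packed, possibly with the help of the universal table). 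After this step the problem is reduced to range reporting on a point set whose coordinates lie in $[d]^2$, with $d=O(\log n)$.

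\textbf{Encoding the reduced instance.} Next I would encode the rank-reduced point set as a single integer that fits in $O(1)$ machine words: the set is a subset of the $d\times d$ grid of size $d^2=O(\log^2 n)$, so it can be described by a bitmask of $O(\log^2 n)$ bits, i.e.\ $O(\log n/\log\log n)$ words — if one insists on $O(1)$ words one first observes $d=O(\log n)$ can be taken with a small enough constant, or alternatively one stores the $d$ points as a sequence of $d$ pairs of $\log d$-bit ranks, totalling $O(d\log\log n)$ bits, again sub-logarithmically many words; either way the whole point set is a word-packed object. The query, once rank-reduced, is a 4-tuple of values in $[d]$, also packable into a single word. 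The universal look-up table is indexed by (encoding of reduced point set, encoding of reduced query) and stores, for each such pair, a precomputed answer in a compressed form: e.g.\ the sorted list of indices of points inside the query, packed into $O(1)$ words, or a pointer/offset structure letting us enumerate them. Since the number of distinct (point-set, query) encodings is $2^{O(\log n \cdot \mathrm{poly}(\log\log n))}$ — wait, that is too large — so instead the table must be built at a finer granularity.

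\textbf{Fixing the table size via micro-blocking.} To keep the table size $o(n)$, I would not tabulate whole $\Theta(\log n)$-point instances but rather split each instance into $O(1)$ (or $O(\log\log n)$) \emph{micro-blocks} of $\Theta(\log n/\log\log n)$ points each, arranged so that a query on the full instance decomposes into $O(1)$ aligned micro-block queries plus $O(1)$ boundary micro-blocks handled directly; each micro-block has encoding length $O(\log n)$ bits, so the number of (micro-block, micro-query) pairs is $2^{O(\log n)} \cdot \mathrm{poly}(\log n) = n^{O(1)}$, which can be made $o(n)$ by choosing the block size with a sufficiently small constant (say $\Theta(\log n/\log\log n)$ with the hidden constant tuned so the exponent is $<1$). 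Each table entry stores the (packed) answer list for its micro-block; a full query then scans $O(1)$ table entries, unpacks the reported indices in $O(1)$ time per index using bitwise operations, maps ranks back to original coordinates via the stored sorted arrays, and outputs them, for a total of $O(k)$ time. The data structure itself stores only the $2d$ sorted coordinate values and the word-packed point set, i.e.\ $O(d)$ words.

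\textbf{Main obstacle.} The delicate part is the simultaneous juggling of the two size constraints: the per-instance storage must be $O(d)=O(\log n)$ words while the shared table must be $o(n)$ in \emph{total}, which forces the tabulation unit to be a micro-block of size $\Theta(\log n/\log\log n)$ rather than the full instance, and one must then verify that (a) an arbitrary query rectangle on the $d$ ranks can be reduced to $O(1)$ micro-block-aligned sub-queries in $O(1)$ time, (b) the predecessor/rank computation converting $[a,b]\times[c,d']$ to rank space genuinely runs in $O(1)$ time (this is where the universal table or a word-level trick like a fusion-tree-style sketch is needed, since naive binary search costs $O(\log\log n)$), and (c) the output extraction is truly $O(1)$ per reported point. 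I expect (b) — achieving constant-time coordinate-to-rank translation within the stated space — to be the crux; it is handled by noting that $2d=O(\log n)$ values fit in $O(1)$ words so a single table lookup (or a constant number of multiplications/shifts) suffices to locate any query value among them.
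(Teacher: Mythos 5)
Your table-size calculation is sound and matches the paper's: tabulating rank-reduced instances of $d'=\Theta(\log n/\log\log n)$ points (rather than the full $d=\Theta(\log n)$) gives $2^{O(d'\log d')}=n^{O(c)}$ configurations, which is $o(n)$ for a small enough constant. The genuine gap is in how a query on the full $d$-point set decomposes into tabulated sub-queries. You assert that the instance can be split into micro-blocks "so that a query decomposes into $O(1)$ aligned micro-block queries plus $O(1)$ boundary micro-blocks," and you yourself flag this as the step to verify — but as stated it is false. There are $\Theta(\log\log n)$ blocks, and a two-dimensional rectangle can intersect $\omega(1)$ of them; spending even $O(1)$ time on each intersected-but-empty block already breaks the $O(k)$ bound when $k$ is small. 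The paper resolves this with a genuinely two-dimensional device you are missing: it partitions $P$ into a $(4\log d)\times(4\log d)$ grid of rows and columns, each containing $O(d')$ points (so each row/column is itself a tabulatable instance), handles the four boundary rows/columns by tabulated queries, and handles the interior via a \emph{top structure} $P_t$ of meta-points, one per nonempty (row, column) cell. Since $|P_t|=O(\log^2 d)=o(d')$, the top structure is also tabulated; a query on $P_t$ returns exactly the nonempty interior cells, and dumping their point lists $L_{ij}$ costs $O(1)$ per reported point. Without this second level (or an equivalent mechanism to enumerate only nonempty block intersections), your construction does not achieve $O(k)$ time.

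A secondary error: to justify $O(1)$-time coordinate-to-rank conversion you claim that "$2d=O(\log n)$ values fit in $O(1)$ words." They do not — $O(\log n)$ coordinates of $\Theta(\log U)$ bits each occupy $\Theta(\log n)$ words. Constant-time predecessor among $O(\log n)$ word-sized keys is achievable (e.g., by a fusion-tree node or $q$-heap, which also uses $o(n)$ shared tables), but not by the packing argument you give. The paper itself dismisses this step with a citation to standard rank-space reduction, so this is a shared weak point, but your stated justification is incorrect and should be replaced by an appeal to the appropriate constant-time predecessor structure.
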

 \begin{proof}
   First we observe that we can answer a query on a set $P'$ that contains at most $d'=(1/4)\log n/\log\log n$    points using a look-up table of size $o(n)$. Suppose that all points in $P'$ have positive integer coordinates bounded by $d'$. There are $2^{d'\log d'}$ combinatorially different sets $P'$.  For every instance of $P'$, we can ask $(d')^4$ different queries and the answer to each query consists of $O(d')$ points. Hence the total space needed to keep answers to all possible queries on all instances of $P'$ is $O(2^{(\log d')d'}(d')^5)=o(n)$ points. The general case (when point coordinates are arbitrary integers) can be reduced to the case when point coordinates are bounded by $d'$ using reduction to rank space~\cite{GabowBT84,AlstrupBR00}.

A query on $P$ can be reduced to $O(1)$ queries on sets that contain $O(d')$ points using the grid approach~\cite{ChanLP11,AlstrupBR00}. The set of points 
$P$ is divided into $4\log d$ columns $C_i$ and $4\log d$ rows $R_j$ so that every row and every column contains $(1/4)d/\log d$ points. Hence we can support range reporting queries on points in a row/column using the look-up table approach described above. The top set $P_t$ contains a meta-point $(i,j)$ iff the intersection of the $i$-th column and the $j$-th row is not empty, $R_j\cap C_i\not=\emptyset$. Since $P_t$ contains $O(\log^2 d)=o(d')$ points, we can also support queries on $P_t$ in $O(k)$ time. For each meta-point $(i,j)$ in $P_t$ we store the list of points $L_{ij}$ contained in the intersection of the $i$-th column and the $j$-th row, $L_{ij}=C_i\cap R_j\cap P$. 

Consider a query $Q=[a,b]\times [c,d]$. If $Q$ is contained in one column or one row, we answer the query using the data structure for that column/row. Otherwise we identify the rows $R_l$ 
and $R_t$ that contain $c$ and $d$ respectively (i.e., the line $y=c$ is contained in $R_b$ and the line $y=d$ is contained in $R_t$). We also identify the columns $C_f$ and $C_r$ containing $a$ and $b$. We report all points in $Q\cap C_l$, $Q\cap C_r$, $Q\cap R_b$ and $Q\cap R_t$. We find  all meta-points $(i,j)$ in $P_t$ such that $f< i < r$ and $l<j <t$; for every found $(i,j)$ we report all points in $L_{ij}$. 
 \end{proof}

%\section{General Two-Dimensional Range Reporting}
%\label{sec:genstretch}
%% Need to check this result. It may be wrong! - Y. N.
% In this section we describe a data structure that answers arbitrary two-dimensional queries.  The query time depends on the stretch factor of the query rectangle. 
% \begin{theorem}
%   \label{theor:linspacegen}
% There is an $O(n\log\log n)$-space data structure that answers a two-dimensional range reporting query $Q=[a,b]\times [c,d]$ in $O(s^{\eps}\log\log n + k)$ time, where $s$ is the stretch factor of the range $Q$.  
% \end{theorem}
% Our result generalizes Theorem~\ref{theor:fatrangerep}.  Our result has the same query time as the best previous data structure for two-dimensional point reporting when the stretch factor $s$ is very large: if $s=\Theta(n^{\alpha})$ for a positive constant $\alpha\le 1$, then all points from $Q$ are reported in $O(s^{\eps}+k)=O(n^{\alpha\cdot\eps}+k)$ time.  However, if the stretch factor 
% $s=o(n^{\alpha}\log\log n)$ for any positive constant $\alpha$, then our data structure is faster: a query is answered in $O(s^{\eps}\log\log n + k)=o(n^{\eps}+k)$ for any constant $\eps>0$.    

\section{Rectangle Stabbing in Three Dimensions}
\label{sec:stab3d}
We consider the scenario when a set of fat three-dimensional rectangles is stored in a data structure. Corners of rectangles lie on an integer grid of size $U$. Given a query point $q$ with integer coordinates, we must report all rectangles that contain $q$. 

Our construction is based on an uncompressed octree $T$.  The set $P(u)$ for an octree node $u$ contains all rectangles $R$ such that: (i) $R$ contains at least one corner  of $\cell(u)$, but (ii) $R$ does not contain any corners of $\cell(w)$ for any ancestor $w$ of $u$.  We will say that a node $u$ is \emph{relevant} for a rectangle $R$  if $R\in P(u)$. 
\begin{lemma}
  \label{lemma:rect}
  Every fat rectangle $R$ is stored in $O(1)$ sets $P(u)$.
\end{lemma}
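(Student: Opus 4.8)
The plan is to analyze the situation level by level in the octree $T$. Call a cube of side $w_\ell:=U/2^\ell$ whose corner coordinates are all multiples of $w_\ell$ a \emph{level-$\ell$ cell}; for a grid point $p$ let $\delta(p)$ be the least $\ell$ such that $p$ is a corner of a level-$\ell$ cell, so $p$ is a corner of a level-$\ell$ cell exactly when $\ell\ge\delta(p)$. Recall that a node $u$ is relevant for $R$ iff $\cell(u)$ has a corner in $R$ while no proper ancestor of $u$ does. Put $\ell_0:=\min\{\delta(p): p\in R\text{ a grid point}\}$ and fix a grid point $q_0\in R$ with $\delta(q_0)=\ell_0$ (well defined since $R$ contains grid points, e.g.\ its own corners). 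I will show that every relevant node lies at level exactly $\ell_0$ and that there are $O(1)$ of them. If $\ell_0=0$, i.e.\ $R$ contains a corner of the root cell $[0,U]^3$, then every non-root node has the root as an ancestor containing a corner of $R$, so at most one node is relevant and we are done; assume henceforth $\ell_0\ge1$.

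The key step is: \emph{every level-$\ell_0$ cell $A$ with $A\cap R\neq\emptyset$ has a corner in $R$}. Write $A=\prod_i[a_i,a_i+w_{\ell_0}]$ with each $a_i$ a multiple of $w_{\ell_0}$ and $R=\prod_i[r_i^-,r_i^+]$, and suppose $A$ meets $R$ but has no corner in $R$. The elementary fact that a box which meets a cube yet contains no corner of it must, in some coordinate $i_0$, have its $i_0$-projection strictly inside the open $i_0$-projection of the cube yields $a_{i_0}<r_{i_0}^-\le r_{i_0}^+<a_{i_0}+w_{\ell_0}$; that is, the $i_0$-projection of $R$ lies strictly between two consecutive multiples of $w_{\ell_0}$. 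But $q_0\in R$ forces $(q_0)_{i_0}\in[r_{i_0}^-,r_{i_0}^+]$, while $(q_0)_{i_0}$ is a multiple of $w_{\ell_0}$ because $\delta(q_0)=\ell_0$ -- a contradiction. I expect this to be the main obstacle: one must pin down the "meets but contains no corner" dichotomy precisely and keep the divisibility bookkeeping (for cell corners, for $q_0$, and for the $2w_{\ell_0}$-grid used below) airtight, including when $R$ is degenerate.

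From the key step the lemma follows quickly. A node $u$ below level $\ell_0$ cannot be relevant: its level-$\ell_0$ ancestor $A$ contains $\cell(u)$, hence meets $R$, hence has a corner in $R$, contradicting relevance of $u$. A node $u$ above level $\ell_0$, say at level $\ell'<\ell_0$, cannot be relevant either: a corner of $\cell(u)$ in $R$ would be a grid point all of whose coordinates are multiples of $w_{\ell'}\ge 2w_{\ell_0}$, hence a grid point of $R$ with $\delta\le\ell'<\ell_0$, contradicting minimality of $\ell_0$. So every relevant node sits at level $\ell_0$. Such a node is a level-$\ell_0$ cell with a corner in $R$, and that corner is a point of $R$ all of whose coordinates are multiples of $w_{\ell_0}$. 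Since $R$ contains no grid point all of whose coordinates are divisible by $2w_{\ell_0}$ (such a point has $\delta\le\ell_0-1$), $R$ has width $<2w_{\ell_0}$ in some direction, so by fatness all three side lengths of $R$ are $<2cw_{\ell_0}$, where $c$ bounds the aspect ratio. Hence $R$ contains at most $(\lfloor 2c\rfloor+1)^3=O(1)$ points of the $w_{\ell_0}$-grid, each a corner of at most $8$ level-$\ell_0$ cells, so at most $8(\lfloor 2c\rfloor+1)^3=O(1)$ nodes are relevant, which completes the proof.
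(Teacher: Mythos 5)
Your proof is correct, and it takes a genuinely different route from the paper's. The paper first splits the fat box $R$ into $O(1)$ cubes and argues that it suffices to bound the relevant nodes of each cube; for a cube of side $s$ it then picks the unique cell size $\ell$ with $s<\ell\le 2s$ and does a two-case analysis (cube contained in a size-$\ell$ cell versus cube containing a size-$\ell$ cell corner), getting at most eight relevant nodes per cube. You instead work with $R$ directly: you single out the critical level $\ell_0$ determined by the coarsest dyadic grid point inside $R$, prove the key claim that every level-$\ell_0$ cell meeting $R$ already has a corner in $R$ (which rules out relevance at all finer levels via condition~(ii)), rule out coarser levels by minimality of $\ell_0$, and only then invoke fatness to bound the number of level-$\ell_0$ cells with a corner in $R$ by $8(\lfloor 2c\rfloor+1)^3$. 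Your version is more self-contained and makes explicit some points the paper leaves implicit (the reduction from $R$ to its constituent cubes, and the interval-projection dichotomy behind ``meets but contains no corner''), at the cost of being longer; the paper's version is shorter and yields the cleaner constant ``at most eight per cube.'' Both arguments are sound and both ultimately use fatness in the same essential way, namely to ensure that $R$ cannot meet too many cells at the one level where relevance can occur.
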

\begin{proof}
 Any fat three-dimensional rectangle $R$ can be divided into $O(1)$ cubes $Q_1$, $Q_2$, $\ldots$, $Q_f$. If $R$ satisfies conditions (i) and (ii) with respect to some node $u$, then 
there is at least one  cube $Q_i$ that satisfies conditions (i) and (ii) for some descendant 
$v$ of $u$. Therefore it is sufficient to show that any cube $Q_i$ is relevant for $O(1)$ nodes of $T$.

Let $s$ denote the size of a cube $Q_i$. There is exactly one cell size $\ell$, such that $s< \ell \le 2s$. 
Suppose that a cube $Q_i$ is entirely contained in a size-$\ell$ cell of some node $u$. Since $s\ge \ell/2$, $Q_i$ contains the common corner $\nu$ of all $u$'s children. $Q_i$ does not contain  corners of $u$ or any of its ancestors. Hence $Q_i$ is relevant for at most eight children $u_i$ of $u$. Now suppose that $Q_i$ contains a corner $\nu$ of some size-$\ell$ cell,  $\cell(u_1')$. Since $s< \ell$, $Q_i$ intersects  at most eight cells of size $\ell$ that share the common corner $\nu$. Every cell $\cell(u_i')$ with corner $\nu$ that intersects $Q_i$ satisfies condition (i). Among all ancestors of $u'_i$ there is exactly one node that satisfies both conditions (i) and (ii). Hence every $Q_i$ is relevant for at most eight nodes. 
\end{proof}

For each node $u$ we keep rectangles $R\in P(u)$ in data structures that answer three-dimensional dominance queries. There is one data structure for every corner $\nu$ of $\cell(u)$. If a rectangle $R$ is relevant for a node $u$, then $\cell(u)$ contains one corner $\mu_R$ of $R$\footnote{To avoid tedious details we assume that every rectangle contains exactly one corner of $\cell(u)$. The case when a rectangle $R\in P(u)$ contains two or four corners of $\cell(u)$ can be handled in a similar way.}.  If a rectangle $R\in P(u)$ contains a corner $\nu$ of $\cell(u)$, then we store  $R$ in the data structure $D_{\nu}(u)$ that supports stabbing queries for points $q\in \cell(u)$. Since rectangles $R\in D_{\nu}$ contain the corner $\nu$ of $\cell(u)$, the rectangle stabbing query is equivalent to a three-dimensional dominance query in this case.  Suppose, for example,  that $\nu$ is the corner with the smallest $x$-, $y$-, and $z$-coordinates in $\cell(u)$. Then reporting rectangles $R\in D_{\nu}$ containing the point $q$ is equivalent to reporting corners $\mu_R$ such that $x(\mu_r)\ge x(q)$, $y(\mu_r)\ge y(q)$, and $z(\mu_r)\ge z(q)$. See Figure~\ref{fig:3dstab} for an example in the 2-d case. The linear-space data structure of Chan~\cite{Chan13} answers three-dimensional point reporting queries in $O(\log\log U + k)$ time. Hence $D_{\nu}$ supports stabbing queries for points $q\in \cell(u)$ in $O(\log\log U + k)$ time. 
\begin{figure}[tb]
  \centering
  \includegraphics[width=.3\textwidth]{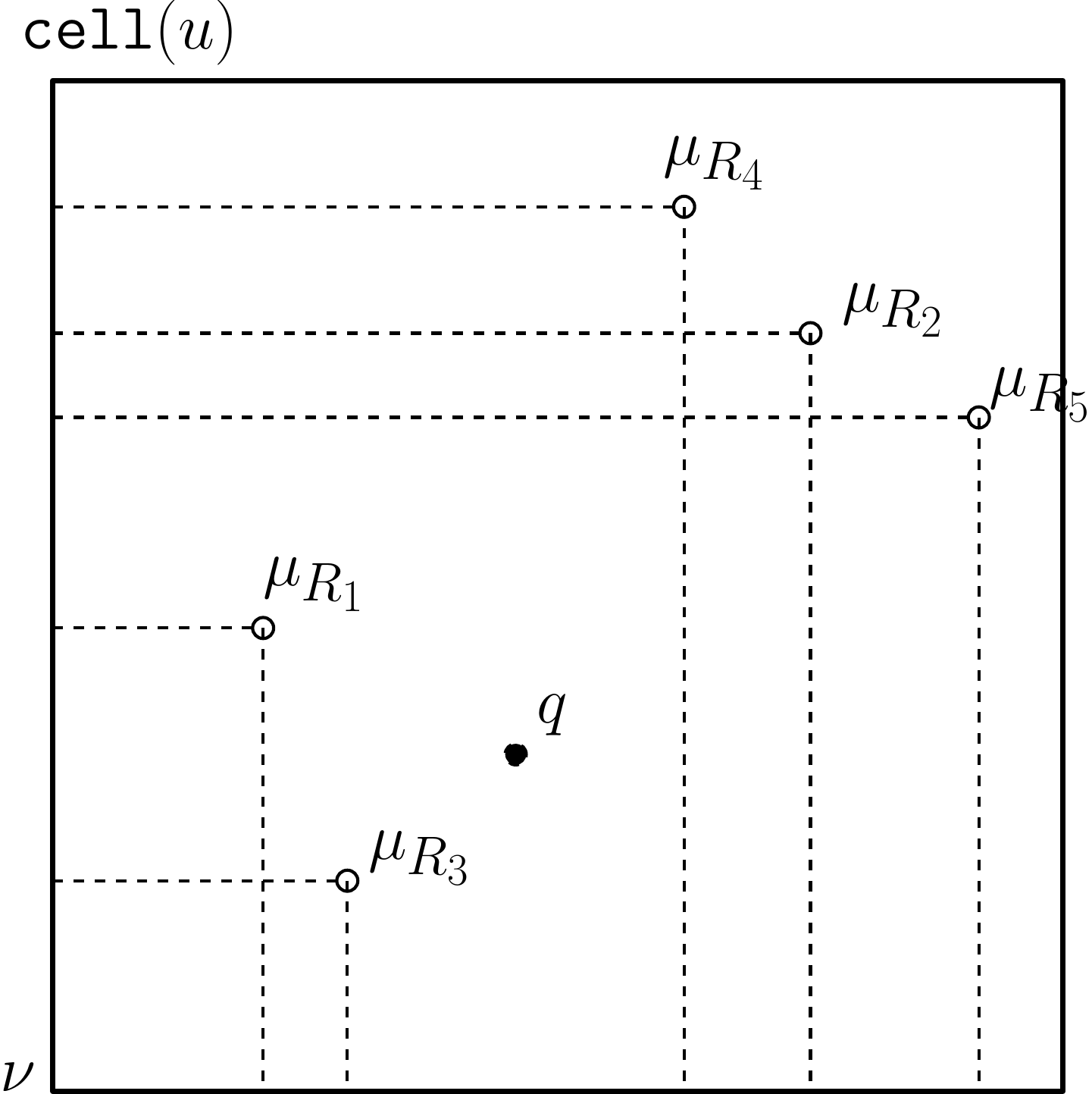}
\caption{\label{fig:3dstab} Relevant rectangles for a quadtree in two dimensions. Relevant rectangles containing the corner $\nu$ are shown with dashed lines, cell boundaries are shown with solid lines. Point $q$ is stabbed by $R_2$, $R_4$, and $R_5$ because $x(\mu_{R_j})\ge x(q)$ and $y(\mu_{R_j})\ge y(q)$ for $j=2,4,5$.} 
\end{figure}
% $D_{\nu}(u)$ answers three-dimensional point reporting queries that correspond to the node $\nu$.  That is, we can report all points in the range that is bounded by the corner $\nu$ and an arbitrary  query point $q$. We can give the following formal definition for queries that are answered by $D_{\nu}(u)$. Every node $\nu$ can be encoded with three bits, $b_1b_2b_3$, where $b_1$, $b_2$, and $b_3$ represent 
% the $x$-, the $y$- and the $z$-axes respectively. The bit $b_1$ has value $0$ ($1$) if the corner if the $x$-coordinate of $\nu$ has minimal (resp.\ maximal) value. Bits $b_2$ and $b_3$ are defined in the same way with respect to $y$- and $z$-coordinates. Let $I(b,q)$ denote an arbitrary half-open interval. If $b=0$,  $I(b,q)$ denotes an interval that is half-open on the left side, $I(0,q)=(-\infty,q]$. If $b=1$, $I(b,q)$ denotes the interval that is half-open on the right side, $I(1,q)=[q,+\infty)$. If the corner $\nu$ is encoded by $b_1b_2b_3$, then $D_{\nu}(u)$ answers queries $I(b_1,q_1)\times I(b_2,q_2)\times I(b_3,q_3)$ for any integer $q_1$, $q_2$, and $q_3$. Thus every  $D_{\nu}(u)$ answers dominance queries. Using the result from~\cite{Chan13},  $D_{\nu}(u)$ answer queries in $O(\log\log U)$ time and uses linear space. 

Given a query point $q$, we visit all nodes $u$ on the path from the root to a leaf node that contains $q$. In every visited node $u$ we answer eight dominance queries and thus report all rectangles $R\in P(u)$ that contain $q$. The total time to answer a query is $O(\log U\log \log U+ k)$. Every rectangle stabbed by $q$ is relevant for some  visited node $u$. Therefore our procedure correctly reports all rectangles stabbed by $q$. The data structure uses space $O(n)$ because each rectangle is stored a constant number of times.
\begin{theorem}
\label{theor:stab3d}
  There is an $O(n)$-space data structure that answers three-dimensional rectangle stabbing queries for a set of fat rectangles on a $[U]^3$ grid. Queries are supported in $O(\log U\log\log U+ k)$ time. 
\end{theorem}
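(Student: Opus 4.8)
My plan is to build an uncompressed octree $T$ on the $[U]^3$ grid and distribute each input rectangle among a constant number of octree nodes using the ``first ancestor that sees a corner'' rule, exactly as in the definition of $P(u)$. The first step is to verify that this distribution is well-defined and constant-size: for each fat rectangle $R$, I would decompose $R$ into $O(1)$ cubes $Q_1,\ldots,Q_f$ (possible since the aspect ratio is $O(1)$), and argue that each cube $Q_i$, being of side $s$ with $s<\ell\le 2s$ for a unique octree level size $\ell$, is ``relevant'' for only $O(1)$ octree nodes. The key geometric observation is that a cube of side at least $\ell/2$ that lies inside a size-$\ell$ cell must contain the central corner shared by that cell's eight children, so it cannot be pushed deeper; and a cube of side less than $\ell$ meets at most eight size-$\ell$ cells sharing a corner. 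This gives Lemma~\ref{lemma:rect}, so total storage is $O(n)$ up to the per-node secondary structures.

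Next I would set up the secondary structures at each node. For a node $u$ and each of the eight corners $\nu$ of $\cell(u)$, collect the rectangles in $P(u)$ that contain $\nu$ (handling the ``contains two or four corners'' cases by similar but notationally heavier reasoning, as the footnote concedes). For a fixed corner $\nu$ — say the one with minimum coordinates — a rectangle $R\in P(u)$ containing $\nu$ has one ``free'' corner $\mu_R$ inside $\cell(u)$, and a query point $q\in\cell(u)$ stabs $R$ iff $\mu_R$ dominates $q$ coordinatewise. So each $D_\nu(u)$ is just a 3-d dominance reporting structure on the points $\{\mu_R\}$; using Chan's linear-space structure~\cite{Chan13} this answers queries in $O(\log\log U + k)$ time and $O(|P(u)|)$ space. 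Summing over all nodes and all eight corners, and using Lemma~\ref{lemma:rect}, the total space is $O(n)$.

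For the query, given $q$ I would walk the root-to-leaf path through the cell containing $q$ — that is $O(\log U)$ nodes since the octree is uncompressed over a $[U]^3$ grid — and at each node answer the (at most eight) dominance queries, costing $O(\log\log U)$ plus output per node. Correctness follows because any rectangle $R$ stabbed by $q$ satisfies condition (i) at every ancestor cell on this path whose corner it contains, and condition (ii) pins down exactly one node of that path as relevant for $R$; there its containing-corner data structure reports it, and the $O(\log\log U)$ worst-case here stretches over $O(\log U)$ levels, so total time is $O(\log U\log\log U + k)$. I expect the main obstacle to be the honest handling of the degenerate cases — rectangles whose free corner coincides with a cell face or edge, or that contain two or four corners of a cell — where the reduction to a single dominance query must be replaced by a small $O(1)$-size case analysis; these are routine but easy to get subtly wrong, and are exactly what the footnote sweeps under the rug.
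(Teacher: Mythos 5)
Your proposal is correct and takes essentially the same route as the paper: the same octree distribution rule and $O(1)$-relevance lemma via the cube decomposition and the unique level $\ell$ with $s<\ell\le 2s$, the same per-corner dominance structures built on Chan's linear-space 3-d dominance reporting, and the same root-to-leaf query walk giving $O(\log U\log\log U+k)$ time and $O(n)$ space. Your added remark about rectangles containing two or four corners of a cell matches the case the paper itself defers to a footnote.
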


\end{document}